\pgfplotsset{compat=1.15}
\def\BibTeX{{\rm B\kern-.05em{\sc i\kern-.025em b}\kern-.08em
    T\kern-.1667em\lower.7ex\hbox{E}\kern-.125emX}}
    \newcommand{\norm}[1]{\left\lVert#1\right\rVert}
\renewcommand{\epsilon}{\varepsilon}
\newcommand{\C}{\mathcal{C}}
\tikzset{ar/.style={-latex,shorten >=-1pt, shorten <=-1pt}}
\newcommand\underrel[3][]{\mathrel{\mathop{#3}\limits_{%
      \ifx c#1\relax\mathclap{#2}\else#2\fi}}}
\def \A{\mathcal{A}}
\def \B{\mathcal{B}}
\def \C{\mathcal{C}}
\def \D{\mathcal{D}}
\def \E{\mathcal{E}}
\def \F{\mathcal{F}}
\def \G{\mathcal{G}}
\def \H{\mathcal{H}}
\def \N{\mathcal{N}}
\def \S{\mathcal{S}}
\def \U{\mathcal{U}}
\def \Y{\mathcal{Y}}
\def \sG {\mathscr{G}}
\def \fG{\mathbf{G}}
\def \fa{\mathbf{a}}
\def \fb{\mathbf{b}}
\def \fg{\mathbf{g}}
\def \fu{\mathbf{u}}
\def \fx{\mathbf{x}}
\def \fy{\mathbf{y}}
\def \fz{\mathbf{z}}
\def \fX{\mathbf{X}}
\def \fY{\mathbf{Y}}
\def \fZ{\mathbf{Z}}
\def \f0{\mathbf{0}}
\definecolor{blau_1a}{RGB}{93,133,195}
\definecolor{blau_2a}{RGB}{0,156,218}
\definecolor{gruen_3a}{RGB}{80,182,149}
\definecolor{gruen_4a}{RGB}{175,204,80}
\definecolor{gruen_5a}{RGB}{221,223,72}
\definecolor{orange_6a}{RGB}{255,224,92}
\definecolor{orange_7a}{RGB}{248,186,60}
\definecolor{rot_8a}{RGB}{238,122,52}
\definecolor{rot_9a}{RGB}{233,80,62}
\definecolor{lila_10a}{RGB}{201,48,142}
\definecolor{lila_11a}{RGB}{128,69,151}
\definecolor{blau_1b}{RGB}{0,90,169}
\definecolor{blau_2b}{RGB}{0,131,204}
\definecolor{gruen_3b}{RGB}{0,157,129}
\definecolor{gruen_4b}{RGB}{153,192,0}
\definecolor{gruen_5b}{RGB}{201,212,0}
\definecolor{orange_6b}{RGB}{253,202,0}
\definecolor{orange_7b}{RGB}{245,163,0}
\definecolor{rot_8b}{RGB}{236,101,0}
\definecolor{rot_9b}{RGB}{230,0,26}
\definecolor{lila_10b}{RGB}{166,0,132}
\definecolor{lila_11b}{RGB}{114,16,133}
\definecolor{mycolor1}{RGB}{249,245,233}
\theoremstyle{remark}	\newtheorem{theorem}{Theorem}
\theoremstyle{remark}	\newtheorem{lemma}[theorem]{Lemma}
\theoremstyle{remark}	
\theoremstyle{remark}	
\theoremstyle{remark} \newtheorem{definition}{Definition}
\theoremstyle{remark} \newtheorem{remark}{Remark}
\theoremstyle{remark}
\begin{document}
\title{Deterministic Identification Over Fading Channels}

\author{
	\vspace{0.1cm}
    \IEEEauthorblockN{Mohammad J. Salariseddigh\IEEEauthorrefmark{1}, Uzi Pereg\IEEEauthorrefmark{1}, 
    Holger Boche\IEEEauthorrefmark{2}, and Christian Deppe\IEEEauthorrefmark{1}} \\
		\vspace{0.25cm}
    \IEEEauthorblockA{\normalsize \IEEEauthorrefmark{1}Institute for Communications Engineering, Technical University of Munich \\
    \IEEEauthorrefmark{2}Chair of Theoretical Information Technology, Technical University of Munich\\
    {\tt Email}: {\{mohammad.j.salariseddigh, uzi.pereg, boche, christian.deppe\}@tum.de}}




%
}
\maketitle
\begin{abstract}
Deterministic identification (DI) is addressed for Gaussian channels with fast and slow fading, where channel side information is available at the decoder. In particular, it is established that the number of messages scales as $2^{n\log(n)R}$, where $n$ is the block length and $R$ is the coding rate. Lower and upper bounds on the DI capacity are developed in this scale for fast and slow fading. Consequently, the DI capacity is infinite in the exponential scale and zero in the double-exponential scale, regardless of the channel noise.
\end{abstract}
\begin{IEEEkeywords}
Channel capacity, fading channels, identification, identification without randomization, deterministic codes, super exponential growth,  channel side information.
\end{IEEEkeywords}
\IEEEpeerreviewmaketitle
\section{Introduction}
Modern communications require the transfer of enormous amounts of data in wireless systems, for cellular communication \cite{BCCK09}, sensor networks \cite{S08}, smart appliances \cite{TLTWYZLNTH19}, and the internet of things \cite{DHV17}, etc. 
Wireless communication is often modelled by fading channels with additive white Gaussian noise \cite{LYHW18,G05,OSW94,BPS98,ZCL09,M18,YX19,ND20,AG20}. In the fast fading regime, the transmission spans over a large number of coherence time intervals \cite{TV05}, hence the signal attenuation is characterized by a stochastic process or a sequence of random parameters \cite{LYT10,HK19,KJ20,NL20}. In some applications, the receiver may acquire channel side information (CSI)  by instantaneous estimation of the channel parameters \cite{GV97,LT07,VMAA14}. On the other hand, in the slow fading regime, the latency is short compared to the coherence time \cite{TV05}, and the behaviour is that of a compound channel \cite{CK82,S97,SS03,BD06,KMJY19}.

In the fundamental communication paradigm considered by
Shannon \cite{S48},  a sender wishes to convey a message through a noisy channel in such a manner that  the receiver will be able to retrieve the original message. In other words, the decoder's task is to determine which message was sent.  Ahlswede and Dueck \cite{AD89} introduced a scenario of a different nature, where the decoder only performs identification and determines whether a particular message was sent or not \cite{AD89,HanBook,SCR20-2}.
Applications include 
the tactile internet \cite{TactileInternet14}, vehicle-to-X communications \cite{KBMRAZT16,JVNRCR16},
digital watermarking \cite{M01,S01,AN06}, online sales \cite{GCE07,GC08}, industry 4.0 \cite{
Industry4.0-3}, health care \cite{Healthcare04-Patent}, and other event-triggered systems.

%
We give two motivating examples for applications of the identification paradigm.
Molecular communication is a promising contender for future applications such as the sixth generation of cellular communication (6G) \cite{6G+,6G_PST}, in which a number of applications require alerts to be identified \cite{6G_PST}. Furthermore, in other systems of molecular communication, a nano-device needs to determine the occurrence of a specific event. For example, in the course of targeted drug delivery \cite{Muller04,Nakano13} or cancer treatment \cite{Hobbs_ea98,Jain99,Wilhelm16}, a nano-device will seek to know whether the blood pH exceeds a critical threshold or not, whether a specific drug is released or not, whether another nano-device has replicated itself, whether a certain molecule was detected, whether a target location in the vessels is identified, or whether the molecular storage is empty, etc \cite{Nakano14}. A second application for identification is  vehicle-to-X communications, where a vehicle that collects  sensor data may ask whether a certain alert message concerning the future movement of an adjacent vehicle was transmitted or not \cite[Sec.~VII]{BD17}.

The identification problem \cite{AD89} can be regarded as a \emph{Post-Shannon} \cite{CGC03} model where the decoder does not perform an estimation, but rather a binary hypothesis test to decide between the hypotheses `sent' or `not sent', based on the observation of the channel output. As the sender has no knowledge of the desired message that the receiver is interested in, the identification problem can be regarded as a test of many hypotheses occurring simultaneously. The scenario where the receiver misses and does not identify his message is called a type I error, or `missed identification', whereas the event where the receiver accepts a false message is called a type II error, or `false identification'.

Ahlswede and Dueck \cite{AD89} required randomized coding for their identification-coding scheme. This means that a randomized source is available to the sender. The sender can make his encoding dependent on the output of this source. It is known that this resource cannot be used to increase the transmission capacity of discrete memoryless channels \cite{A78}.
A remarkable result of identification theory is that given local randomness at the encoder, reliable identification can be attained such that the code size, i.e., the number of messages, grows double exponentially in the block length $n$, i.e., $\sim 2^{ 2^{nR}}$ \cite{AD89}. This differs sharply from the traditional transmission setting where the code size scales only exponentially, i.e., $\sim{2^{nR}}$. Beyond the exponential gain in identification, the extension of the problem to more complex scenarios reveals that the identification capacity has a very different behavior compared to the transmission capacity \cite{feedback,correlation,BV00,BD18_2,W04,BL17}. For instance, feedback \emph{can} increase the identification capacity \cite{feedback} of a memoryless channel, as opposed to the transmission capacity \cite{S56}. Nevertheless, it is difficult to implement randomized-encoder identification (RI) codes that will achieve such performance, because it requires the encoder to process a bit string of exponential length. The construction of identification codes is considered in \cite{SCR20-2,VK93,KT99,Bringer09,Bringer10}. Identification for Gaussian channels is considered in \cite{MasterThesis,LDB20,Labidi2021,Ezzine2021,BV00}.

In the deterministic setup for a DMC, the number of messages scales exponentially in the blocklength \cite{AD89,AN99,J85,Bur00}, as in the traditional setting of transmission. Nevertheless, the achievable identification rates are significantly higher than those of transmission. In addition, deterministic codes often have the advantage of simpler implementation and simulation \cite{PP09}, explicit construction \cite{A09}, and single-block reliable performance. In particular, J\'aJ\'a \cite{J85} showed that the deterministic identification (DI) capacity \begin{footnote}{The DI capacity in the literature is also referred to as the non-randomized identification (NRI) capacity \cite{AN99} or the dID capacity \cite{BV00}.}\end{footnote} of a binary symmetric channel is 1 bit per channel use, based on combinatorial methods. The meaning of this result is that one can exhaust the entire input space and assign (almost) all sequences in the $n$-dimensional space $\{0,1\}^n$ as codewords. Ahlswede et al. \cite{AD89,AN99} stated that the DI capacity of a DMC  with a stochastic matrix $W$ in the exponential scale is given by the logarithm of the number of distinct row vectors of $W$ (see \cite[Sec.~IV]{AD89} and abstract of \cite{AN99}). Furthermore, the DI $\epsilon$-capacity of the Gaussian channel was addressed by Burnashev \cite{BV00}.

In a recent work by the authors \cite{SPBD21ICC,SPBD20arXiv}, we addressed deterministic identification for the DMC subject to an input constraint and have also shown that the DI capacity of the standard Gaussian channel, without fading, is infinite in the exponential scale. Our previous results \cite{SPBD21ICC,SPBD20arXiv} reveal a gap of knowledge in the following sense. For a finite blocklength $n$, the number of codewords must be finite.  Thereby, the meaning of the infinite capacity result is that the number of messages scales super-exponentially. The question remains what is the true order of the code size.
In mathematical terms, what is the scale $L$ for which the DI capacity is positive yet finite. Here, we will answer this question.

In this paper, we consider deterministic identification for Gaussian channels with fast fading and slow fading, where channel side information (CSI) is available at the decoder. We show that for Gaussian channels, the number of messages scales as $2^{n\log(n)R}$, and develop lower and upper bounds on the DI capacity in this scale. As a consequence, we deduce  that the DI capacity of a Gaussian Channel with fast fading is infinite in the exponential scale, and zero in the double-exponential scale, regardless of the channel noise. For slow fading, the DI capacity in the exponential scale is infinite, unless  the fading gain can be zero or arbitrarily close to zero (with positive probability), in which case the DI capacity is zero. In comparison with the double exponential scale in RI coding, the scale here is significantly lower.

The results have the following geometric interpretation. At first glance, it may seem reasonable that for the purpose of identification, one codeword could represent two messages. While identification allows overlap between decoding regions \cite{AADT20,LDB20}, overlap at the encoder is not allowed for deterministic codes. We observe that when two messages are represented by codewords that are close to one another, then identification fails. Thus, deterministic coding imposes the restriction that the codewords need to be distanced from each other.

Based on fundamental properties of packing arrangements \cite{C10}, \cite{CHSN13}, 
the optimal packing of non-overlapping spheres of radius $\sqrt{n\epsilon}$ contains an exponential number of spheres, and by decreasing the radius of the codeword spheres, the exponential rate can be made arbitrarily large. However, in the  derivation of our lower bound in the $2^{n\log(n)R}$-scale, we pack spheres of a sub-linear radius $\sqrt{n\epsilon_n}\sim n^{1/4}$, which results in $\sim 2^{\frac{1}{4}n\log(n)}$ codewords.
\section{Definitions and Related Work}
\label{sec:preliminaries}
In this section, we introduce the channel models and coding definitions. Here, we only consider the Gaussian channel with fast fading. The channel description and coding definition for slow fading will be presented in Section~\ref{Sec.GaussianChannelSlow}.
\subsection{Notation}
 We use the following notation conventions throughout. Lowercase letters $x,y,z,\ldots$  stand for constants and values of random variables, and uppercase letters $X,Y,Z,\ldots$ stand for random variables.  
 The distribution of a real random variable $X$ is specified by a cumulative distribution function (cdf)
$F_X(x)=\Pr(X\leq x)$ for $x\in\mathbb{R}$, or alternatively, by a probability density function (pdf) $f_X(x)$, when it exists.
 The notation $\mathbf{x}=(x_1,x_{2},\ldots,x_n)$ is used  for a sequence of length $n$, and  the $\ell^2$-norm of $\mathbf{x}$ is denoted  by $\norm{\mathbf{x}}$. Element-wise product of vectors is denoted by $\fx\circ\fy=(x_t y_t)_{t=1}^n$. A random sequence $\fX$ and its distribution $F_{\fX}(\fx)$ are defined accordingly. 
We denote the hyper-sphere of radius 
$r$ around $\fx_0$ by
\begin{align}
    \S_{\fx_0}(n,r) = \left\{\fx\in\mathbb{R}^n \;:\, \norm{\fx-\fx_0} \leq r \right\} \;,\,
\end{align}
and its volume by $\text{Vol}(\S)$. The closure of a set $\A$ is denoted by $\text{cl}(\A)$.
The set of consecutive natural numbers from $1$ to $M$ is denoted by $[\![M]\!]$.
\subsection{Fast Fading Channel}
\label{Subsec.ChannelDescription}
Consider the Gaussian channel $\mathscr{G}_{\,\text{fast}}$ with fast fading, specified by the input-output relation
\begin{align}
    \mathbf{Y}=\mathbf{G}\circ\mathbf{x} + \fZ \;,\,
\end{align}
where $\mathbf{G}$ is a random sequence of fading coefficients and $\fZ$ is an additive white Gaussian process (see Figure~\ref{Fig.GaussianChannelFast}).
Specifically, $\mathbf{G}$ is a 
sequence of i.i.d. continuous random variables $ \sim f_G$ with finite moments, while the noise sequence $\fZ$ is i.i.d. $\sim \mathcal{N}(0,\sigma_Z^2)$. It is assumed that the noise sequence $\fZ$ and the sequence of fading coefficients $\fG$ are statistically independent, and that the values of the fading coefficients belong to a bounded set $\G$, either countable or uncountable. The transmission power is limited to 
$
    \norm{\textbf{x}}^2\leq nA 
$. 

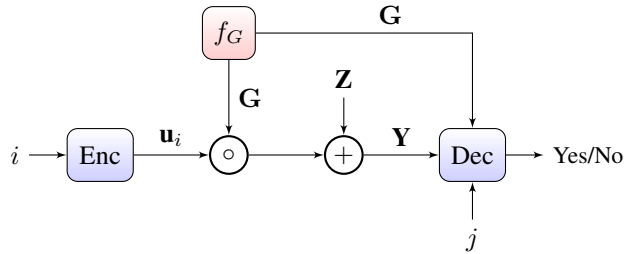
\begin{figure}[htb]
    \centering
	\tikzstyle{l} = [draw, -latex']
\tikzstyle{farbverlauf} = [ top color=white, bottom color=white!80!gray]
\tikzstyle{block1} = [draw,top color=white, bottom color=blue!20!white, rectangle, rounded corners,
minimum height=2em, minimum width=2.5em]
\tikzstyle{block2} = [draw,top color=white, bottom color=white!80!blue, rectangle, rounded corners,
minimum height=2em, minimum width=2.5em]
\tikzstyle{block3} = [draw,top color=white, bottom color=red!20, rectangle, rounded corners,
minimum height=2em, minimum width=2em]
\tikzstyle{input} = [coordinate]
\tikzstyle{sum} = [draw, circle,inner sep=0pt, minimum size=5mm,  thick]
\tikzstyle{multiplexer} = [draw, circle,inner sep=0pt, minimum size=5mm,  thick]
\tikzstyle{arrow}=[draw,->]
\begin{tikzpicture}[auto, node distance=2cm,>=latex']
\node[] (M) {$i$};
\node[block1,right=.5cm of M] (enc) {Enc};
\node[multiplexer, right=1cm of enc] (multiplexer) {$\circ$};
\node[sum, right=1cm of multiplexer] (channel) {$+$};

\node[block2, right=1cm of channel] (dec) {Dec};
\node[below=.5cm of dec] (Target) {$j$};
\node[right=.5cm of dec] (Output) {$\text{\small Yes/No}$};

\node[block3,above=1cm of multiplexer] (gain) {$f_G$};

\path [l] (gain) -| node [text width=2.5cm,above] {$\fG$} (dec) ;

\node[above=.5cm of channel] (noise) {$\textbf{Z}$};

\draw[->] (M) -- (enc);
\draw[->] (enc) --node[above]{$\textbf{u}_i$} (multiplexer);
\draw[->] (multiplexer) -- (channel);
\draw[->] (gain) -- node [] {$\fG$} (multiplexer);
\draw[->] (noise) -- (channel);
\draw[->] (channel) --node[above]{$\textbf{Y}$} (dec);

\draw[->] (dec) -- (Output);
\draw[->] (Target) -- (dec);

\end{tikzpicture}
	\caption{Deterministic identification for the Gaussian channel with fast fading, where $\mathbf{G}$ is a 
sequence of i.i.d. fading coefficients $ \sim f_G$, and the noise sequence $\fZ$ is i.i.d. $\sim \mathcal{N}(0,\sigma_Z^2)$. }
	\label{Fig.GaussianChannelFast}
\end{figure}
\subsection{Coding with Fast Fading}
In this paper, we consider codes with different size orders. For instance, when we discuss the exponential scale, we refer to a code size that scales as
$L(n,R)=2^{nR}$. On the other hand, in the double-exponential scale, the code size is 
$L(n,R)=2^{2^{nR}}$.
\begin{definition}
    \label{Def.DominatingScale}
    Let $L_1(n,R)$ and $L_2(n,R)$ be two coding scales. We say that $L_1$ \emph{dominates} $L_2$ if
    \begin{align}
        \lim_{n\rightarrow\infty}\frac{L_2(n,b)}{L_1(n,a)} = 0 \;.\,
    \end{align}
   for all $a,b>0$. We will denote this relation by 
    $L_2\prec L_1$.
\end{definition}
In complexity theory of computer science, the relation above is denoted by the `small $o$-notation',
$L_2(n,1)=o(L_1(n,1))$ \cite{CLRS09}. Beyond exponential, other orders that commonly appear in complexity theory are the linear, logarithmic, and polynomial scales, $nR$, $\log(nR)$, and $(nR)^k$. Here, we show that the scale of the DI capacity turns out to be the $L(n,R)=n^{nR}=2^{n\log(n)R}$.
The corresponding ordering is
\begin{align}
    nR \prec \log(nR) \prec (nR)^k \prec 2^{nR} \prec 2^{n\log(n)R}  \prec 2^{2^{nR}}.
\end{align}
We note that the scale of the DI capacity can be viewed is a special case of a tetration function, as ${^{2}n}=n^n=2^{n\log(n)}$ \cite{G47,AB09}.

\begin{definition}
\label{Def.GdeterministicIDCode}
An $(L(n,R),n)$ DI code with channel side information (CSI) at the decoder for a Gaussian channel $\sG_{\,\text{fast}}$ under input constraint $A$, assuming $L(n,R)$ is an integer, is defined as a system $(\U,\mathscr{D})$ which consists of a codebook $\U=\{ \mathbf{u}_i\}$ for $i\in[\![L(n,R)]\!]$ where $\U\subset \mathbb{R}^n$, such that 
\begin{align} 
    \norm{\mathbf{u}_i}^2\leq nA \;,\,
\end{align}
for all $i\in[\![L(n,R)]\!]$ and a collection of decoding regions
\begin{align}
    \mathscr{D}=\{ \D_{i,\mathbf{g}} \} \;,\,
\end{align}
for $i\in[\![L(n,R)]\!]$ and $\mathbf{g}\in \G^n$ with
\begin{align}
    \bigcup_{i=1}^{L(n,R)} \D_{i,\mathbf{g}} \subset \mathbb{R}^n \;.\,
\end{align}
Given a message $i\in [\![L(n,R)]\!]$, the encoder transmits $\mathbf{u}_i$. The decoder's aim is to answer the following question: Was a desired message $j$ sent or not?
There are two types of errors that may occur: Rejecting the true message, or accepting a false message. Those are referred to as type I and type II errors, respectively.
 
 The error probabilities of the identification code $(\U,\mathscr{D})$ are given by
\begin{align}
 P_{e,1}(i)&= 1-\int_{\G^n}  f_{\mathbf{G}}(\mathbf{g})\bigg[ \int_{\D_{i,\mathbf{g}}} f_{\fZ}(\mathbf{y}-\mathbf{g}\circ\mathbf{u}_i) d\mathbf{y}\bigg] d\mathbf{g} \;,\, 
 \label{Eq.GTypeIErrorDefFast}
 \\
 P_{e,2}(i,j)&=\int_{\G^n} f_{\mathbf{G}}(\mathbf{g}) \bigg[
 \int_{\D_{j,\mathbf{g}}} 
 f_{\fZ}(\mathbf{y}-\mathbf{g}\circ\mathbf{u}_i) \, d\mathbf{y} \bigg] d\mathbf{g} \;,\,
 \label{Eq.GTypeIIErrorDefFast}
\end{align}
with $f_{\fZ}(\mathbf{z})=\frac{1}{(2\pi\sigma_Z^2)^{n/2}} e^{-\norm{\mathbf{z}}^2/2\sigma_Z^2}$ (see Figure~\ref{Fig.GaussianChannelFast}).
An $(L(n,R),n,\lambda_1,\lambda_2)$ DI code further satisfies
\begin{align}
\label{Eq.GTypeIError}
 P_{e,1}(i) &\leq \lambda_1 \;,\, \\
\label{Eq.GTypeIIError}
 P_{e,2}(i,j) &\leq \lambda_2  \;,\,
\end{align}
for all $i,j\in [\![L(n,R)]\!]$, such that
$i\neq j$.
A rate $R>0$ is called achievable if for every
$\lambda_1,\lambda_2>0$ and sufficiently large $n$, there exists an ($L(n,R),n,\lambda_1,\lambda_2$) DI code.
The operational DI capacity in the $L$-scale is defined as the supremum of achievable rates, and will be denoted by $\mathbb{C}_{DI}(\mathscr{G}_{\,\text{fast}},L)$.
\end{definition}
In \cite{AD89}, Ahlswede and Dueck presented identification codes with a \emph{double}-exponential number of messages, i.e., $L(n,R)=2^{2^{nR}}$. 
%
As mentioned earlier, Ahlswede and Dueck \cite{AD89} needed randomized encoding for their identification-coding scheme. This means that a randomized source is available to the sender. The sender can make his encoding dependent on the output of this source.
Therefore, a randomized-encoder identification (RI) code is defined in a similar manner where the encoder is allowed to select a codeword $U_i$ at random according to some conditional input distribution $Q(x^n|i)$.
The RI capacity 
is denoted by $\mathbb{C}_{RI}(\sG_{\,\text{fast}},L)$. 
\begin{remark}
It can be readily shown that in general, if the capacity in an exponential scale is finite, then it is zero in the double exponential scale. Conversely, if the capacity in a double exponential scale is positive, then the capacity in the exponential scale is $+\infty$. This principle will be further generalized in Lemma~\ref{Lem.MiscScale} to any pair of scales $L_1$ and $L_2$ where $L_2$ is dominated by $L_1$. 
\end{remark}
\begin{remark}
We have mentioned molecular communication (MC) as a motivating application of this study.
Recently, there have been significant advances in MC for complex nano-networks.
The Internet of Things incorporates smart devices, which can be accessed and controlled via the Internet.
The advances in nanotechnology contribute to 
the development of devices in the nanoscale range, referred as nanothings. 
The interconnection of nanothings with the Internet is known as Internet of NanoThings (IoNT)
and is the basis for various future healthcare and military applications \cite{Dress15}.
Nanothings are based on synthesized materials, using electronic circuits, and EM-based communication. Unfortunately, these characteristics could be harmful for some application environments, such
as inside the human body. Furthermore, the concept of Internet of Bio-NanoThings (IoBNT) has been introduced in \cite{Aky15}, where nanothings are biological cells that are created using tools from synthetic biology and nanotechnology. Such biological nanothings are called bio-nanothings.
Similar to artificial nanothings, bio-nanothings have control (cell nucleus), power (mitochondrion), communication (signal pathways), and sensing/actuation (flagella, pili or cilia) units.
For the communication between cells, MC is well suited, since the natural exchange of information between cells is already based on this paradigm. MC in cells is based on signal pathways (chains of chemical reactions) that process information that is modulated into chemical characteristics,
such as molecule concentration. 
Identification is of interest for these applications.
However, it is not clear how randomized identification (RI) codes can be incorporated into such systems. 
In the case of Bio-NanoThings, it is uncertain whether the natural biological processes can be controlled or reinforced by local randomness at this level.
Therefore, for the design of synthetic IoNT, or for the analysis and utilization of IoBNT,  identification with deterministic encoding is more appropriate.
\end{remark}
A geometric illustration for the type I and II error probabilities is given in Figure~\ref{Fig.GeometricID}. When the encoder sends the message $i$, but the channel output is outside $\D_i$, then a type I error occurs. This kind of error is also considered in traditional transmission. In identification, the decoding sets can overlap. A type II error covers the case where the output sequence belongs to the intersection of $\D_i$ and $\D_j$ for $j\neq i$.
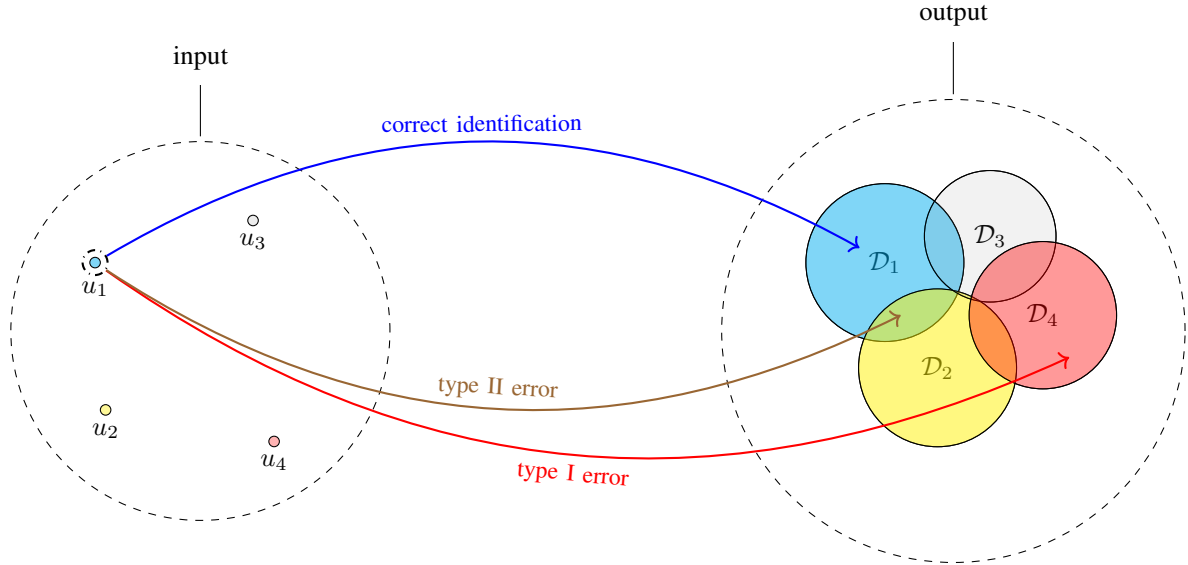
\begin{figure}[ht!]
    \label{Fig.GeometricID}
    \scalebox{1}{
\begin{tikzpicture}[scale=.7][thick]
\draw[dashed] (0,.7) circle (3.6cm);
\draw[->] (0,4.4) -- ++(0,1.5)  node [fill=white,inner sep=6pt](a){$\text{input}$};

\node[circle,draw, dash dot, thick, minimum size=.25mm] (u1) at (-2,2) {};




\node at (-2,1.5) (u1n) {$u_1$};
\node at (-1.8,-1.2) (u2n) {$u_2$};
\node at (1,2.4) (u3n) {$u_3$};
\node at (1.4,-1.8) (u4n) {$u_4$};

\draw [fill=gray!30!white, fill opacity=0.5, name path=i] (1,2.8) circle (.1cm);
\draw [fill=cyan, fill opacity=0.5, name path=k] (-2,2) circle (.1cm);
\draw [fill=red, fill opacity=0.3, name path=k] (1.4,-1.4) circle (.1cm);
\draw [fill=yellow, fill opacity=0.4, name path=k] (-1.8,-.8) circle (.1cm);
\draw[dashed] (14.3,.7) circle (4.4cm);
\draw[->] (14.3,5.2) -- ++(0,1.5)  node [fill=white,inner sep=6pt](a){$\text{output}$};

\draw (13,2) circle (1.5cm);
\draw (14,0) circle (1.5cm);
\draw (15,2.5) circle (1.25cm);
\draw (16,1) circle (1.4cm);

\node at (13,2) (D1n) {$\D_1$};
\node at (14,0) (D2n) {$\D_2$};
\node at (15,2.5) (D3n) {$\D_3$};
\node at (16,1) (D4n) {$\D_4$};

\draw [fill=cyan, fill opacity=0.5, name path=k] (13,2) circle (1.5cm);

\draw [fill=yellow, fill opacity=0.5, name path=k] (14,0) circle (1.5cm);

\draw [fill=gray, fill opacity=0.1, name path=i] (15,2.5) circle (1.25cm);

\draw [fill=red, fill opacity=0.4, name path=j] (16,1) circle (1.4cm);


\path (u1) edge [-> , thick, blue, bend left] node [sloped,midway,above,font=\small] {correct identification}(D1n);

\path (u1) edge [-> , thick, red, bend right] node [sloped,midway,below,font=\small] {type I error}(16.5,0.2);

\path (u1) edge [-> , thick, brown!80!black, bend right] node [sloped,midway,above,font=\small] {type II error}(13.3,1);

\end{tikzpicture}
}
   \caption{Geometric illustration of  identification errors in the deterministic setting. The arrows indicate three scenarios for the channel output, given that the encoder transmitted the codeword $u_1$ corresponding to $i=1$.
   If the channel output is outside $\D_1$, then a type I error has occurred, as indicated by the  bottom red arrow. This kind of error is also considered in traditional transmission. In identification, the decoding sets can overlap. If the channel output belongs to $\D_1$ but also belongs to $\D_2$, then a type II error has occurred, as indicated by the middle brown arrow. Correct identification occurs when the channel output belongs \emph{only} in 
   $\D_1$, which is marked in blue.}
\end{figure}
\subsection{Related Work}
We briefly review known results for the standard Gaussian channel. We begin with 
the RI capacity, i.e., when the encoder uses a stochastic mapping. 
Let $\G$ denote the standard Gaussian channel,
$
    Y_t=gX_t+Z_t
$, 
where the gain $g>0$ is a deterministic constant which is known to the encoder and the decoder. As mentioned above, using RI codes, it is possible to identify a double-exponential number of messages in the block length $n$.
That is, given a rate $R<\mathbb{C}_{RI}(\G,L)$, with $L(n,R)=2^{2^{nR}}$, there exists a sequence of $(2^{2^{nR}},n)$ RI codes with vanishing error probabilities.
Despite the significant difference between the definitions in the identification setting and in the transmission setting, it was shown that the value of the RI capacity in the double-exponential scale equals the Shannon capacity of transmission \cite{AD89,LDB20}.
\begin{theorem}[see \cite{AD89,LDB20}]
 \label{Th.DMCIdentificationCapacity1}
	The RI capacity in the double-exponential scale of the standard Gaussian channel is given by
	\begin{align}
	\mathbb{C}_{RI}(\G,L) =\frac{1}{2}\log\left( 1+\frac{g^2 A}{\sigma_Z^2} \right) \,,\;\text{ for 
	$L(n,R)=2^{2^{nR}}$} \;.\,
	\end{align} 
	Hence, the RI capacity in the exponential scale, i.e., for 
	$L(n,R)=2^{nR}$ is infinite, that is,
	\begin{align}
	\mathbb{C}_{RI}(\G,L)=\infty \;.\,
	\end{align}
\end{theorem}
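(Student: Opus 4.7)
The plan is to establish the capacity formula in the double-exponential scale first, and then deduce the exponential-scale statement as an immediate corollary via the remark preceding the theorem. Both directions hinge on the Shannon transmission capacity of the Gaussian channel, $C=\tfrac{1}{2}\log(1+g^2 A/\sigma_Z^2)$.

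For the achievability direction, I would follow the classical two-layer construction of Ahlswede and Dueck. Invoke Shannon's achievability to obtain, for each $n$, a Gaussian transmission code with $M_n = 2^{nR_t}$ codewords $\{\fv_k\}_{k=1}^{M_n}$ and a decoder $\psi:\mathbb{R}^n \to [\![M_n]\!]$, satisfying the input constraint $\norm{\fv_k}^2 \le nA$ and having vanishing maximum error, where $R_t$ is any rate below $C$. Next, for each identification index $j \in [\![2^{2^{nR}}]\!]$ I would assign a subset $T_j \subseteq [\![M_n]\!]$ of a suitably chosen size $K_n$. The randomized encoder for message $j$ draws $K$ uniformly from $T_j$ and transmits $\fv_K$; the decoder for message $j$ computes $\hat{K} = \psi(\fY)$ and accepts iff $\hat{K} \in T_j$. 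A random-coding argument over the family $\{T_j\}$, combined with Chernoff-type concentration for the relative overlap $|T_i \cap T_j|/|T_i|$, shows that for any $R < R_t$ one can realize $2^{2^{nR}}$ such subsets while driving both type I and type II error probabilities to zero. Taking $R_t \to C$ yields the lower bound $\mathbb{C}_{RI}(\G, L) \ge C$ for $L(n,R)=2^{2^{nR}}$.

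For the converse, I would adapt the soft-converse argument of Ahlswede and Dueck to the continuous-output Gaussian setting. Any RI code with vanishing error probabilities induces, for each message $j$, an output distribution $\mu_j$ on $\mathbb{R}^n$; the vanishing type II error forces the pairwise total-variation distances $\norm{\mu_i-\mu_j}_{\mathrm{TV}}$ to approach $1$. A volume/packing argument on the output space under the power constraint, combined with the strong converse for the Gaussian transmission channel, bounds the number of such pairwise distinguishable output distributions by $2^{2^{n(C+\epsilon)}}$, yielding $R \le C+\epsilon$ and, upon letting $\epsilon \to 0$, the matching upper bound. The main technical obstacle lies precisely here: one must quantize the continuous Gaussian output carefully and replace the combinatorial cardinality counts available in the discrete-memoryless proof with measure-theoretic volume estimates that still match the Shannon exponent. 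The exponential-scale claim is then automatic, since $L_1(n,R)=2^{nR}$ is dominated by $L_2(n,R')=2^{2^{nR'}}$ in the sense of Definition~\ref{Def.DominatingScale}, so a positive capacity in the $L_2$-scale forces $\mathbb{C}_{RI}(\G,L_1)=+\infty$ as noted in the remark preceding the theorem.
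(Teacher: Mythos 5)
Note that the paper does not prove Theorem~\ref{Th.DMCIdentificationCapacity1} at all: it is quoted as a known result from \cite{AD89,LDB20}, and the only ingredient the paper itself supplies is the deduction of the exponential-scale statement from positivity in the double-exponential scale via the remark preceding the theorem (formalized in Lemma~\ref{Lem.MiscScale}). Your outline is essentially the approach of those references — the Ahlswede--Dueck two-layer construction (random subsets of a capacity-achieving Gaussian transmission code) for achievability, a quantization/soft-converse argument for the upper bound, and the domination argument for the exponential-scale claim — so it is consistent with the cited proofs; the one place where your sketch leaves real work undone is the continuous-output converse, which is precisely the technical contribution of \cite{LDB20} rather than something that follows from a generic total-variation packing bound.
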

In a recent paper by the authors, the deterministic case was considered in the exponential scale.
\begin{theorem}[see \cite{SPBD21ICC,SPBD20arXiv}]
\label{Th.DMCIdentificationCapacity2}
	The DI capacity of the standard Gaussian channel in the exponential scale, i.e., for 
	$L(n,R)={2^{nR}}$ is infinite, that is,
	\begin{align}
	\mathbb{C}_{DI}(\G,L) = \infty \;.\,
	\end{align} 
\end{theorem}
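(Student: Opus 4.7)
My plan is to demonstrate that every $R > 0$ is achievable in the scale $L(n,R) = 2^{nR}$, which immediately yields $\mathbb{C}_{DI}(\G, L) = \infty$. The construction rests on two geometric ingredients: the input ball $\S_{\mathbf{0}}(n, \sqrt{nA})$ accommodates exponentially many codeword-spheres of any fixed squared radius $n\epsilon$, and a Gaussian noise vector concentrates on the shell $\|\fZ\|^2 \approx n\sigma_Z^2$, so decoding balls of the right radius catch the legitimate signal while rejecting impostors.

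\textit{Codebook.} Fix $R>0$ and pick $\epsilon > 0$ small enough that $\tfrac{1}{2}\log(A/(4\epsilon)) > R$. A standard maximal-packing (greedy) argument inside $\S_\mathbf{0}(n, \sqrt{nA})$ produces $M \geq 2^{nR}$ codewords $\fu_1,\ldots,\fu_M$ satisfying $\|\fu_i\|^2 \leq nA$ and $\|\fu_i - \fu_j\| \geq 2\sqrt{n\epsilon}$ for all $i \neq j$; the volume-ratio bound guarantees at least $(A/(4\epsilon))^{n/2}$ such points.

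\textit{Decoding and error analysis.} Select a constant $\delta \in (0,\,4g^2\epsilon)$ and set $\D_i = \{\fy \in \mathbb{R}^n : \|\fy - g\fu_i\|^2 \leq n(\sigma_Z^2 + \delta)\}$. When $\fu_i$ is transmitted, $\fY = g\fu_i + \fZ$, so the type I error equals $\Pr(\|\fZ\|^2 > n(\sigma_Z^2 + \delta))$, which decays exponentially in $n$ by the standard chi-squared concentration around $n\sigma_Z^2$. For the type II event with a target $j \neq i$, I would expand
\begin{align*}
\|\fY - g\fu_j\|^2 = g^2\|\fu_i - \fu_j\|^2 + 2g\langle \fu_i - \fu_j, \fZ\rangle + \|\fZ\|^2.
\end{align*}
Using $g^2\|\fu_i-\fu_j\|^2 \geq 4g^2 n\epsilon$ together with $\|\fZ\|^2 \approx n\sigma_Z^2$, the event $\|\fY-g\fu_j\|^2 \leq n(\sigma_Z^2+\delta)$ forces the Gaussian inner product $\langle \fu_i - \fu_j, \fZ\rangle$, whose standard deviation is $\|\fu_i-\fu_j\|\sigma_Z$, to fall below a threshold whose magnitude is of order $\Theta(\sqrt{n})$ after normalization. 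A single Gaussian tail bound then renders $P_{e,2}(i,j)$ exponentially small. Since the code definition requires the type II bound pointwise in $(i,j)$, no union bound over the $2^{nR}$ alternatives is required.

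\textit{Main obstacle.} The technical heart is the joint choice of $\epsilon$ and $\delta$: $\delta$ must be positive so that chi-squared concentration fires on the type I side, yet strictly below $4g^2\epsilon$ so that the cross-term in the type II expansion leaves a genuine linear-in-$n$ gap for Gaussian anti-concentration to close. Because both parameters remain positive constants independent of $n$, this balance is routine; the achievable exponent $R < \tfrac{1}{2}\log(A/(4\epsilon))$ then grows unboundedly as $\epsilon \downarrow 0$, establishing $\mathbb{C}_{DI}(\G, L) = \infty$. The only subtlety worth flagging is that this cruder argument---adequate in the exponential scale---is precisely the one that breaks down in the finer $n\log(n)$ scale considered later, where $\epsilon$ must be taken as a vanishing sequence $\epsilon_n$ and the error analysis must be sharpened accordingly.
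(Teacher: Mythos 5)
Your proposal is correct, and the core machinery (a saturated packing of codewords at pairwise distance $2\sqrt{n\epsilon}$, a minimum-distance decoder with threshold $n(\sigma_Z^2+\delta)$, concentration of $\norm{\fZ}^2$, and a tail bound on the cross term $2g\langle \fu_i-\fu_j,\fZ\rangle$, with no union bound needed since the type II criterion is pairwise) is the same as what underlies the paper's achievability arguments. The route is nonetheless different from how this statement sits in the paper: here Theorem~\ref{Th.DMCIdentificationCapacity2} is imported from \cite{SPBD21ICC,SPBD20arXiv}, and the analogous "infinite in the exponential scale" claims (e.g.\ the second part of Theorem~\ref{Th.GDICapacityFast}) are obtained indirectly, by first proving a \emph{finite positive} capacity of at least $1/4$ in the $2^{n\log(n)R}$ scale using a \emph{vanishing} radius $\epsilon_n = A/n^{(1-b)/2}$, and then invoking the scale-dominance Lemma~\ref{Lem.MiscScale}. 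You instead keep $\epsilon$ a fixed constant, get $\geq (A/4\epsilon)^{n/2}$ codewords, and let $\epsilon\downarrow 0$ over a sequence of codes to exhaust every finite exponential rate; this is the more elementary, self-contained proof of exactly this statement (and matches the paper's geometric remark that shrinking the codeword spheres makes the exponential rate arbitrarily large), while the paper's route buys the sharper $2^{n\log(n)R}$-scale characterization from which the exponential-scale result follows for free. Two small points to tighten: the type II bound should be stated as an intersection with the event $\norm{\fZ}^2 \geq n(\sigma_Z^2-\delta')$ (adding its vanishing probability), with $\delta+\delta'<4g^2\epsilon$ so the inner product must fall below $-\Theta(n)$ while its standard deviation is only $\Theta(\sqrt{n})$; and what you call "anti-concentration" is simply a Gaussian tail bound.
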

Our results in \cite{SPBD21ICC,SPBD20arXiv} reveal a gap of knowledge in the following sense. For a finite blocklength $n$, the number of codewords must be finite. Thereby, the meaning of the result in Theorem~\ref{Th.DMCIdentificationCapacity2} is that the number of messages scales super-exponentially. The question remains what is the true order of the code size.
In mathematical terms, what is the scale $L$ for which the DI capacity is positive yet finite. In the next section, we provide an answer to this question.
\section{Main Results - Channels with Fast Fading}
\label{Sec.GaussianChannelFast}
Before we give our results, we state the following property of dominating coding scales, as defined in Definition~\ref{Def.DominatingScale}. Recall that we use the notation of $L_2\prec L_1$ for a coding scale $L_1$ that dominates $L_2$. The following Lemma readily follows from the Definition~\ref{Def.DominatingScale}.
\begin{lemma}
\label{Lem.MiscScale}
Suppose that the capacity in $L_0$-scale is positive yet finite, i.e., $0<\mathbb{C}_{DI}(\sG_{\,\text{fast}}, L_0)<\infty$. Then, for every $L^{-} \prec L_0$,
\begin{align}
 \mathbb{C}_{DI}(\sG_{\,\text{fast}}, L^{-}) =  \infty  \;.\,
\intertext{
and for every $L_0\prec L^{+}$, we have
}
 \mathbb{C}_{DI}(\sG_{\,\text{fast}}, L^{+}) =  0 \;.\,
\end{align}
\end{lemma}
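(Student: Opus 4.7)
The plan is to exploit a simple monotonicity property of deterministic identification codes: any $(M, n, \lambda_1, \lambda_2)$ DI code, restricted to an arbitrary subset of $M' \leq M$ of its message indices together with the corresponding decoding regions, is itself an $(M', n, \lambda_1, \lambda_2)$ DI code. This follows directly from the per-message form of \eqref{Eq.GTypeIError}--\eqref{Eq.GTypeIIError}, since both bounds are required uniformly over the indices, and restricting to a smaller index set only reduces the worst case (and, for the type II error, shrinks the set of ordered pairs under consideration). Combined with the ratio condition in Definition~\ref{Def.DominatingScale}, this lets one convert existence of codes in one scale into existence of codes in a dominated scale.

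For the first assertion, I would fix an arbitrary target rate $R > 0$ and pick any $a$ with $0 < a < \mathbb{C}_{DI}(\sG_{\,\text{fast}}, L_0)$, which is possible by the hypothesis that the $L_0$-capacity is strictly positive. Achievability of rate $a$ in the $L_0$-scale provides, for every $\lambda_1, \lambda_2 > 0$ and all sufficiently large $n$, an $(L_0(n, a), n, \lambda_1, \lambda_2)$ DI code. Because $L^{-} \prec L_0$ forces $L^{-}(n, R)/L_0(n, a) \to 0$, we have $L^{-}(n, R) \leq L_0(n, a)$ for $n$ large, and the monotonicity remark yields an $(L^{-}(n, R), n, \lambda_1, \lambda_2)$ DI code. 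Hence $R$ is achievable in the $L^{-}$-scale, and since $R>0$ was arbitrary, $\mathbb{C}_{DI}(\sG_{\,\text{fast}}, L^{-}) = \infty$.

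For the second assertion I would argue by contrapositive. Suppose $\mathbb{C}_{DI}(\sG_{\,\text{fast}}, L^{+}) > 0$ and pick $R$ with $0 < R < \mathbb{C}_{DI}(\sG_{\,\text{fast}}, L^{+})$. Achievability supplies $(L^{+}(n, R), n, \lambda_1, \lambda_2)$ codes for every $\lambda_1, \lambda_2 > 0$ and large $n$. Fix any $b > 0$; since $L_0 \prec L^{+}$, the ratio $L_0(n, b)/L^{+}(n, R)$ tends to zero, so $L_0(n, b) \leq L^{+}(n, R)$ eventually, and the same monotonicity yields an $(L_0(n, b), n, \lambda_1, \lambda_2)$ DI code. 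Thus every $b > 0$ is achievable in the $L_0$-scale, which contradicts $\mathbb{C}_{DI}(\sG_{\,\text{fast}}, L_0) < \infty$.

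The argument is essentially soft, driven entirely by the definition of scale dominance plus the monotonicity observation. The only mild subtlety is the implicit rounding of $L_0(n, a)$, $L^{-}(n, R)$, and $L_0(n, b)$ to integers, but Definition~\ref{Def.GdeterministicIDCode} already incorporates this convention, so I do not anticipate any genuine obstacle beyond writing the four lines cleanly.
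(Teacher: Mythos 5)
Your proposal is correct, and it is the same kind of soft argument as the paper's, but the mechanics differ enough to be worth noting. The paper proves both directions by contradiction: it takes the $(M_n,n,\lambda_1,\lambda_2)$ codes guaranteed by achievability at one scale, invokes the \emph{converse} at the other scale to bound $M_n$ by $L^{-}(n,c^{-}+\epsilon)$ (resp.\ bounds $M_n^{+}$ by $L_0(n,c_0)$), and then contradicts Definition~\ref{Def.DominatingScale}. You instead make explicit the sub-code monotonicity property (any $(M,n,\lambda_1,\lambda_2)$ DI code restricted to $M'\leq M$ messages is an $(M',n,\lambda_1,\lambda_2)$ DI code) and transfer achievability directly: for $L^{-}\prec L_0$ you push codes down to the smaller scale to get every rate achievable, and for $L_0\prec L^{+}$ you argue by contrapositive. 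This buys a slightly cleaner and arguably more rigorous write-up, since the monotonicity step is exactly what the paper's converse-based bound on $M_n$ implicitly relies on (otherwise "capacity finite" only controls the code size along the blocklengths where the converse bites), and it also quietly justifies the one nuance in your first part — that some positive rate $a$ below $\mathbb{C}_{DI}(\sG_{\,\text{fast}},L_0)$ is genuinely achievable — because downward closedness of achievable rates follows from the same restriction argument together with monotonicity of $L_0(n,\cdot)$ in $R$. No gap; both routes rest on the same size comparison drawn from the dominance definition.
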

The proof of Lemma~\ref{Lem.MiscScale} is given in Appendix~\ref{App.MiscScale}.

Our DI capacity theorem for the Gaussian channel with fast fading is stated below.
\begin{theorem}
\label{Th.GDICapacityFast}
Assume that the fading coefficients are positive and bounded away from zero, i.e., $0 \notin \text{cl}(\G)$.
The DI capacity of the Gaussian channel $\sG_{\,\text{fast}}$ with fast fading in the $2^{n\log(n)}$-scale, i.e., for $L(n,R)=2^{(n\log n)R}$ is bounded by
\begin{align}
    \label{Eq.GDICapacityFast1}
    \frac{1}{4}\leq\mathbb{C}_{DI}(\sG_{\,\text{fast}},L) \leq 1 \;.\,
\end{align}
Hence, the DI capacity is infinite in the exponential scale  and zero in the double-exponential, i.e.,
\begin{align}
    \label{Eq.GDICapacityFast2}
    \mathbb{C}_{DI}(\sG_{\,\text{fast}},L) = 
    \begin{cases}
    \infty & \text{ for 
	$L(n,R)=2^{nR}$} \;,\,
	\\
	0&\text{ for 
	$L(n,R)=2^{2^{nR}}$} \;.\,
\end{cases}
\end{align}
\end{theorem}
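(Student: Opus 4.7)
The plan is to prove the two bounds in \eqref{Eq.GDICapacityFast1} separately and then deduce \eqref{Eq.GDICapacityFast2} as an immediate consequence of Lemma~\ref{Lem.MiscScale}. The geometric picture anticipated in the introduction drives both directions: deterministic identification still forces the codewords to be pairwise separated in Euclidean norm, even though the decoding regions themselves may overlap. Sphere packing inside the input power ball $\S_{\f0}(n,\sqrt{nA})$ is therefore the central tool on both sides.

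For achievability I would build a codebook $\U=\{\fu_i\}$ by a greedy sphere packing with a sublinear separation. Fix $\epsilon_n=1/\sqrt{n}$ so that the packing radius $r_n=\sqrt{n\epsilon_n}=n^{1/4}$ grows only as $n^{1/4}$. A standard volume comparison in $\mathbb{R}^n$ then fits at least $\sim (\sqrt{A}/2)^n\, n^{n/4}$ codewords into $\S_{\f0}(n,\sqrt{nA})$ with pairwise distance at least $2r_n$, immediately giving $\log_2 M/(n\log_2 n)\to 1/4$. Equip the decoder, which has access to the CSI $\fg$, with the energy-shell regions
\begin{equation*}
\D_{i,\fg}=\bigl\{\fy\,:\,\bigl|\norm{\fy-\fg\circ\fu_i}^2-n\sigma_Z^2\bigr|<n\delta_n\bigr\},
\end{equation*}
for a second sequence $\delta_n$ to be tuned. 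Type~I error vanishes by chi-square concentration of $\norm{\fZ}^2$ provided $n\delta_n^2\to\infty$. For type~II, I would expand
\begin{equation*}
\norm{\fY-\fg\circ\fu_j}^2=\norm{\fg\circ(\fu_i-\fu_j)}^2+2\langle\fZ,\fg\circ(\fu_i-\fu_j)\rangle+\norm{\fZ}^2,
\end{equation*}
and exploit the hypothesis $0\notin\mathrm{cl}(\G)$, which yields $\norm{\fg\circ\fv}\geq g_{\min}\norm{\fv}$ for some $g_{\min}>0$. The first term is then at least $4g_{\min}^2 n\epsilon_n$, which dominates the tolerance $n\delta_n$ as soon as $\delta_n$ is chosen a small constant multiple of $\epsilon_n$; a Gaussian tail bound handles the cross term, so the type~II error also vanishes.

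For the converse, I would argue that if both error probabilities of a code are at most $\lambda<1/3$, then for every pair $i\neq j$ the joint distributions of $(\fY,\fG)$ under hypotheses $i$ and $j$ must satisfy $\norm{P_{\fY,\fG\given i}-P_{\fY,\fG\given j}}_{\mathrm{TV}}\geq 1-2\lambda$. Pinsker's inequality, combined with the Gaussian KL identity
\begin{equation*}
D\bigl(P_{\fY,\fG\given i}\,\|\,P_{\fY,\fG\given j}\bigr)=\frac{\mathbb{E}[G^2]}{2\sigma_Z^2}\,\norm{\fu_i-\fu_j}^2,
\end{equation*}
lifts this TV lower bound to a minimum codeword separation, and the usual volume packing argument in $\S_{\f0}(n,\sqrt{nA})$ then caps the codebook size $M$ by an expression whose $\log_2$ is at most $n\log_2 n\cdot(1+o(1))$, yielding $R\leq 1$ in the $n^n$-scale.

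With \eqref{Eq.GDICapacityFast1} established, \eqref{Eq.GDICapacityFast2} follows directly from Lemma~\ref{Lem.MiscScale} together with the scale ordering $2^{nR}\prec 2^{n\log(n)R}\prec 2^{2^{nR}}$: positivity of the capacity in the $n^n$-scale forces infinite capacity in the (slower) exponential scale, while finiteness forces zero capacity in the (faster) double-exponential scale. The main obstacle I expect lies in the achievability step, where $\epsilon_n$ and $\delta_n$ must be tuned simultaneously: $\delta_n$ must be large enough for the noise norm to concentrate inside the decoding shell (so that type~I vanishes), yet small enough relative to $\epsilon_n$ that the shifted image $\fg\circ(\fu_i-\fu_j)$ escapes that shell (so that type~II vanishes), all while preserving the codeword count of order $n^{n/4}$. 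Threading this needle uniformly over every fading distribution supported on a set bounded and bounded away from zero is the delicate technical core of the argument.
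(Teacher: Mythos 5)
Your overall architecture matches the paper (packing with sub-linear radius $\sim n^{1/4}$ for achievability, a minimum-distance-plus-packing converse, and Lemma~\ref{Lem.MiscScale} for the second claim), but the achievability step as parametrized does not work. With $\epsilon_n=1/\sqrt{n}$ and $\delta_n$ a small constant multiple of $\epsilon_n$, the shell half-width $n\delta_n=\Theta(\sqrt{n})$ is of exactly the same order as the standard deviation of $\norm{\fZ}^2-n\sigma_Z^2$ (which is $\sqrt{2n}\,\sigma_Z^2$) and of the cross term $2\langle\fZ,\fg\circ(\fu_i-\fu_j)\rangle$; by the CLT the type~I probability then converges to a strictly positive constant, not to zero, and it cannot be driven below an arbitrary $\lambda_1$ without taking the constant $c$ in $\delta_n=c\epsilon_n$ beyond $4g_{\min}^2$, which destroys the type~II margin $(4g_{\min}^2-c)n\epsilon_n$. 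Indeed, the two requirements you yourself state --- $n\delta_n^2\to\infty$ for type~I and $\delta_n=O(\epsilon_n)=O(1/\sqrt{n})$ for type~II --- are mutually exclusive, so no choice of $\delta_n$ threads the needle at this exact scaling. The paper resolves precisely this tension by inserting a polynomial slack: $\epsilon_n=A\,n^{-\frac{1}{2}(1-b)}$ and $\delta_n=\gamma^2\epsilon_n/3$ with $b>0$ arbitrarily small, so that $n\delta_n=\Theta(n^{(1+b)/2})\gg\sqrt{n}$, Chebyshev gives error bounds of order $n^{-b}$, and the rate $\frac{1}{4}(1-b)-\frac{2}{\log n}$ still tends to $\frac14$. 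With that modification (and the saturated-packing density bound $\Delta_n\geq 2^{-n}$, which is what your ``standard volume comparison'' amounts to), your achievability argument coincides with the paper's.

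Your converse, by contrast, is a genuinely different route from the paper's and it is sound. The paper proves a minimum-distance lemma by contradiction: assuming $\norm{\fu_{i_1}-\fu_{i_2}}<\sqrt{A}\,n^{-(1/2+b)}$, it uses Markov's inequality on $\norm{\fG\circ(\fu_{i_1}-\fu_{i_2})}$ and a pointwise continuity estimate of the Gaussian density over the decoding region to force $P_{e,1}+P_{e,2}$ near $1$. You instead invoke the generic identification observation that the region $\{(\fy,\fg):\fy\in\D_{j,\fg}\}$ has mass at least $1-\lambda_1$ under message $j$ and at most $\lambda_2$ under $i$, so the joint laws of $(\fY,\fG)$ are at total-variation distance at least $1-\lambda_1-\lambda_2$, and then Pinsker with the exact identity $D=\frac{\mathbb{E}[G^2]}{2\sigma_Z^2}\norm{\fu_i-\fu_j}^2$ (valid by the chain rule since the $\fG$-marginal is common) yields a codeword separation bounded below by the \emph{constant} $2\sigma_Z(1-\lambda_1-\lambda_2)/\sqrt{\mathbb{E}[G^2]}$. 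Feeding a constant separation into the volume bound gives $\log M\leq n\left(\tfrac12\log n+O(1)\right)$, i.e., $R\leq\tfrac12$, which is strictly stronger than the paper's $R\leq 1$ (your stated conclusion $R\leq 1$ follows a fortiori); it also needs only $\mathbb{E}[G^2]<\infty$ and sidesteps the paper's case analysis over fading realizations. The final deduction of \eqref{Eq.GDICapacityFast2} from Lemma~\ref{Lem.MiscScale} is exactly as in the paper.
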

The proofs for the lower and upper bounds in the first part of Theorem~\ref{Th.GDICapacityFast} are given in Section~\ref{Subsec.AchievFast} and Section~\ref{Subsec.ConvFast}, respectively. 
The second part of the theorem is a direct consequence of Lemma~\ref{Lem.MiscScale}.
\begin{remark}
The code scale can also be thought of as a sequence of monotonically increasing functions $L_n(R)$ of the rate. Hence, given a code of size $M=L_n(R)$, the coding rate can be obtained from the inverse relation $R=L_n^{-1}(M)$. In particular, for the transmission setting \cite{S48}, or DI coding for a DMC \cite{J85}, the coding rate is defined as
\begin{align}
    R=\frac{1}{n}\log(M) \;.\,
\end{align}
Whereas for RI coding \cite{AD89}, the rate was defined as
\begin{align}
    R=\frac{1}{n}\log\log(M) \;.\,
\end{align}
On the other hand, using the scale $L(n,R)=2^{n\log(n)R}$ as for Gaussian channels stated in Theorem~\ref{Th.GDICapacityFast} above, the coding rate is
\begin{align}
    R=\frac{\log M}{n\log n} \;.\,
\end{align}
\end{remark}
\subsection{Lower Bound (Achievability Proof)}
\label{Subsec.AchievFast}
Consider the Gaussian channel $\mathscr{G}_{\,\text{fast}}$ with fast fading.
We show that the DI capacity is bounded by  $\mathbb{C}_{DI}(\sG_{\,\text{fast}},L) \geq \frac{1}{4}$ for $L(n,R)=2^{n\log(n)R}$. Achievability is established using a dense packing arrangement and a simple distance-decoder.
A DI code for the Gaussian channel $\sG_{\,\text{fast}}$ with fast fading is constructed as follows.
Consider the normalized input-output relation,
\begin{align}
    \bar{\fY}=\fG\circ\bar{\fx}+\bar{\fZ} \;.\,
    \label{Eq.IOFast}
\end{align}
where the noise sequence $\bar{\fZ}$ is i.i.d. $\sim \mathcal{N}\left(0,\frac{\sigma_Z^2}{n}\right)$, and an input power constraint 
\begin{align}
    \norm{\bar{\fx}} \leq \sqrt{A} \;,\,
\end{align}
with $\bar{\fx} = \frac{1}{\sqrt{n}}\fx$, $\bar{\fZ} = \frac{1}{\sqrt{n}}\fZ$, and $\bar{\fY} = \frac{1}{\sqrt{n}} \fY$. Assuming $0 \notin \text{cl}(\G)$, there exists a positive number $\gamma$ such that
\begin{align}
    \label{Ineq.FadingCoeff}
    |G_t| > \gamma \;,\,
\end{align}
for all $t$ with probability $1$.
\subsubsection*{Codebook construction}
\label{Subsec.CodebookConstructionGaussian}
We use a packing arrangement of non-overlapping hyper-spheres of radius $\sqrt{\epsilon_n}$ in a hyper-sphere of radius $(\sqrt{A}-\sqrt{\epsilon_n})$, with
\begin{align}
    \epsilon_n = \frac{A}{n^{\frac{1}{2}(1-b)}} \;,\,
\end{align}
where $b>0$ is arbitrarily small.
Let $\mathscr{S}$ denote a sphere packing, i.e., an arrangement of $L$ non-overlapping spheres $\S_{\fu_i}(n,r_0)$, $i\in [\![L(n,R)]\!]$ that cover a larger sphere $\S_{\mathbf{0}}(n,r_1)$, with $r_1>r_0$. 
As opposed to standard sphere packing coding techniques, the small spheres are not necessarily entirely contained within the bigger sphere. That is, we only require that the spheres are disjoint from each other and have a non-empty intersection with $\S_{\mathbf{0}}(n,r_1)$. See illustration in  Figure~\ref{Fig.Density}. The packing density $\Delta_n(\mathscr{S})$ is defined as the fraction of the large sphere volume $\text{Vol}\left(\S_{\mathbf{0}}(n,r_1)\right)$ that is  covered by the small spheres, i.e.,
\begin{align}
    \Delta_n(\mathscr{S}) \triangleq \frac{\text{Vol}\left(\S_{\f0}(n,r_1)\cap\bigcup_{i=1}^{L}\S_{\fu_i}(n,r_0)\right)}{\text{Vol}(\S_{\f0}(n,r_1))} \;,\,
    \label{Eq.DensitySphereFast}
\end{align}
(see \cite[Ch.~1]{CHSN13}). A sphere packing is called \emph{saturated} if no spheres can be added to the arrangement without overlap.
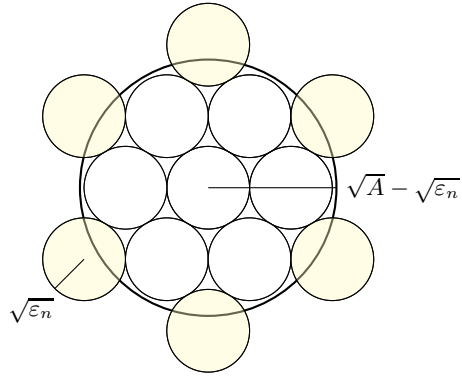
\begin{figure}[htb]
    \centering
	\scalebox{1}{

\begin{tikzpicture}[scale=.55][thick]

\draw[thick] (0,0) circle (3.1cm);

\draw (0,0) circle (1cm);
\draw [] (0,0) circle (1cm);

\draw (2,0) circle (1cm);
\draw [] (2,0) circle (1cm);

\draw (1,1.73) circle (1cm);
\draw [] (1,1.73) circle (1cm);

\draw (-1,1.73) circle (1cm);
\draw [fill=white, fill opacity=0.5] (-1,1.73) circle (1cm);

\draw (-2,0) circle (1cm);
\draw [fill=white, fill opacity=0.5] (-2,0) circle (1cm);

\draw (-1,-1.73) circle (1cm);
\draw [fill=white, fill opacity=0.3] (-1,-1.73) circle (1cm);

\draw (1,-1.73) circle (1cm);
\draw [fill=white, fill opacity=0.4] (1,-1.73) circle (1cm);


\draw (3,-1.73) circle (1cm);
\draw [fill=yellow!30!white, fill opacity=0.4] (3,-1.73) circle (1cm);

\draw (3,1.73) circle (1cm);
\draw [fill=yellow!30!white, fill opacity=0.4] (3,1.73) circle (1cm);

\draw (0,2*1.73) circle (1cm);
\draw [fill=yellow!30!white, fill opacity=0.4] (0,2*1.73) circle (1cm);

\draw (-3,1.73) circle (1cm);
\draw [fill=yellow!30!white, fill opacity=0.4] (-3,1.73) circle (1cm);

\draw (-3,-1.73) circle (1cm);
\draw [fill=yellow!30!white, fill opacity=0.4] (-3,-1.73) circle (1cm);

\draw (0,-2*1.73) circle (1cm);
\draw [fill=yellow!30!white, fill opacity=0.4] (0,-2*1.73) circle (1cm);

\draw (0,0) -- (3.1,0) node [right,font=\small] {$\sqrt{A}-\sqrt{\epsilon_n}$};

\draw (-3,-1.73)-- (-3.707,-2.437)    node [below,font=\small] {$\sqrt{\epsilon_n}\qquad$};

\end{tikzpicture}}
	\caption{Illustration of a sphere packing, where small spheres of radius $r_0=\sqrt{\epsilon_n}$ cover a bigger sphere of radius $r_1=\sqrt{A}-\sqrt{\epsilon_n}$. The small spheres are disjoint from each other and have a non-empty intersection with  the large sphere. Some of the small spheres, marked in yellow, are not  entirely contained within the bigger sphere, and yet they are considered to be a part of the packing arrangement. As we assign a codeword to each small sphere center, the norm of a codeword is bounded by $\sqrt{A}$ as required.}
	\label{Fig.Density}
\end{figure}
We use a packing argument that has a similar flavor as in the Minkowski--Hlawka theorem in lattice theory \cite{CHSN13}. We use the property that there exists an arrangement $\bigcup_{i=1}^{L} \S_{\fu_i}(n,\sqrt{\epsilon_n})$ of non-overlapping spheres inside $\S_{\f0}(n,\sqrt{A})$ with a density of $\Delta_n(\mathscr{S})\geq 2^{-n}$ \cite[Lem.~2.1]{C10}. Specifically, consider a saturated packing arrangement of $L(n,R)=2^{n\log(n)R}$ spheres of radius $r_0=\sqrt{\epsilon_n}$ covering the large sphere $\S_{\f0}(n,r_1=\sqrt{A}-\sqrt{\epsilon_n})$, i.e., such that no spheres can be added without overlap. Then, for such an arrangement, there cannot be a point in the large sphere $\S_{\f0}(n,r_1)$ with a distance of more than $2r_0$ from all sphere centers. Otherwise, a new sphere could be added. As a consequence, if we double the radius of each sphere, the  $2r_0$-radius spheres cover the whole sphere of radius $r_1$. In general, the volume of a hyper-sphere of radius $r$ is given by
\begin{align}
    \text{Vol}\left(\S_\fx(n,r)\right)=\frac{\pi^{\frac{n}{2}}}{\Gamma(\frac{n}{2}+1)} \cdot r^{n} \;.\,
    \label{Eq.VolS}
\end{align}
(\cite[see Eq.~16]{CHSN13}).
Hence, doubling the radius multiplies the volume by $2^n$. Since the $2r_0$-radius spheres cover the entire sphere of radius $r_1$, it follows that the original $r_0$-radius packing has density at least $2^{ -n}$, i.e.,
\begin{align}
    \Delta_n(\mathscr{S})\geq 2^{-n} \;.\,
    \label{Eq.MinkowskiDeltaFast}
\end{align}

We assign a codeword to the center $\fu_i$ of each small sphere.
The codewords satisfy the input constraint as
\begin{align}
    \norm{\fu_i} & \leq r_0+r_1
    \nonumber\\
    & = \sqrt{A} \;.\,
\end{align}
Since the small spheres have the same volume, the total number of spheres is bounded from below by
\begin{align}
    L& =\frac{\text{Vol}\left(\bigcup_{i=1}^{L}\S_{\fu_i}(n,r_0)\right)}{\text{Vol}(\S_{\fu_1}(n,r_0))}
    \nonumber\\&
    \geq\frac{\text{Vol}\left(\S_{\f0}(n,r_1)\cap\bigcup_{i=1}^{L}\S_{\fu_i}(n,r_0)\right)}{\text{Vol}(\S_{\fu_1}(n,r_0))}
    \nonumber\\&
    =\frac{\Delta_n(\mathscr{S})\cdot
    \text{Vol}(\S_{\mathbf{0}}(n,r_1)))}{\text{Vol}(\S_{\fu_1}(n,r_0))}
    \nonumber\\&
    \geq 2^{-n}\cdot \frac{
    \text{Vol}(\S_{\mathbf{0}}(n,r_1)))}{\text{Vol}(\S_{\fu_1}(n,r_0))}
    \nonumber\\
    & = 2^{-n}\cdot \frac{r_1^n}{r_0^n} \;,\,
\end{align}
where the second equality is due to (\ref{Eq.DensitySphereFast}),  the inequality that follows holds by (\ref{Eq.MinkowskiDeltaFast}), and the last equality follows from (\ref{Eq.VolS}).
That is, the codebook size satisfies
\begin{align}
    L(n,R) & = 2^{n\log(n)R}
    \nonumber\\
    & \geq 2^{-n} \cdot \left(\frac{\sqrt{A}-\sqrt{\epsilon_n}}{\sqrt{\epsilon_n}}\right)^n \;.\,
\end{align}
Hence,
\begin{align}
    \label{Eq.RateFast}
    R&\geq \frac{1}{\log(n)}\log\left(\frac{\sqrt{A}-\sqrt{\epsilon_n}}{\sqrt{\epsilon_n}}\right)-\frac{1}{\log(n)}\nonumber\\&
    =\frac{1}{\log(n)}\log\left( n^{\frac{1}{4}(1-b)}-1 \right)
    -\frac{1}{\log(n)}
    \nonumber\\&
    \geq \frac{1}{\log(n)}\left(\log n^{\frac{1}{4}(1-b)} - 1 \right)
    -\frac{1}{\log(n)}
    \nonumber\\& 
    =\frac{1}{4}(1-b)-\frac{2}{\log(n)} \;,\,
\end{align}
which tends to $\frac{1}{4}$ when $n\to\infty$ and $b\to 0$, where the second inequality holds since $\log(t-1)\geq \log(t)- 1$ for $t\geq 2$.
\subsubsection*{Encoding}
Given a message $i\in [\![L(n,R)]\!]$, 
 transmit $\bar{\fx}=\bar{\fu}_i$.

\subsubsection*{Decoding}
Let
\begin{align}
    \label{Eq.beta_n}
    \delta_n & = \frac{\gamma^2\epsilon_n}{3}
    \nonumber\\
    & = \frac{\gamma^2 A}{3n^{\frac{1}{2}(1-b)}} \;.\,
\end{align}
To identify whether a message $j\in [\![L(n,R)]\!]$ was sent, given the sequence $\fg$, the decoder checks whether the channel output $\bar{\fy}$ belongs to the following decoding set,
\begin{align}
    \D_{j,\fg} = \left\{ \bar{\fy} \in \mathbb{R}^n \,:\; \norm{\bar{\fy}-\fg\circ\bar{\fu}_j} \leq \sqrt{\sigma_Z^2+\delta_n} \right\}  \;.\,
\end{align}
\subsubsection*{Error Analysis}
Consider the type I error, i.e., when the transmitter sends $\bar{\fu}_i$, yet $\bar{\fY}\notin\D_{i,\fG}$. For every $i\in[\![L(n,R)]\!]$, the type I error probability is bounded by
\begin{align}
    P_{e,1}(i)&= \Pr\left(\norm{\bar{\fY}-\fG\circ\bar{\fu}_i}^2 > \sigma_Z^2+\delta_n \,\big|\,\bar{\fx} = \bar{\fu}_i \right)
    \nonumber\\
    &=\Pr\left(\norm{\bar{\fZ}}^2> \sigma_Z^2+\delta_n  \right)
    \nonumber\\
    &=\Pr\left(\sum_{t=1}^n {\bar{Z}_t}^2 > \sigma_Z^2+\delta_n \right)
    \nonumber\\
    &\leq \Pr\left(\sum_{t=1}^n {\bar{Z}_t}^2> \sigma_Z^2+\delta_n \right)
    \nonumber\\
    & \leq \frac{3\sigma_Z^4}{n\delta_n^2}
    \nonumber\\
    & = \frac{27\sigma_Z^4}{n^bA^2\gamma^4}
    \nonumber\\
    & \leq \lambda_1 \;,\,
\end{align}
for sufficiently large $n$ and arbitrarily small $\lambda_1>0$, where the second inequality follows by Chebyshev's inequality, and since the fourth moment of a Gaussian variable $V\sim \N(0,\sigma_V^2)$ is $\mathbb{E}\{V^4\}=3\sigma_V^4$.

Next, we address the type II error, i.e., when $\bar{\fY}\in\D_{j,\fG}$ while the transmitter sent $\bar{\fu}_i$.
Then, for every $i,j\in[\![L(n,R)]\!]$, where $i\neq j$, the type II error probability is given by
\begin{align}
    P_{e,2}(i,j)&= \Pr\left( \norm{\bar{\fY}-\fG\circ\bar{\fu}_j}^2\leq \sigma_Z^2+\delta_n \,\big|\,\bar{\fx}=\bar{\fu}_i \right)
    \nonumber\\
    &=\Pr\left( \norm{\fG\circ(\bar{\fu}_i-\bar{\fu}_j)+\bar{\fZ}}^2\leq \sigma_Z^2+\delta_n \right) \;.\,
    \label{Eq.Pe2GFast}
\end{align}
Observe that the square norm can be expressed as
\begin{align}
    \norm{\fG\circ(\bar{\fu}_i-\bar{\fu}_j)+\bar{\fZ}}^2=
    \norm{\fG\circ(\bar{\fu}_i-\bar{\fu}_j)}^2+\norm{\bar{\fZ}}^2+2\sum_{t=1}^n G_t(\bar{u}_{i,t}-\bar{u}_{j,t})\bar{Z}_t \;.\,
     \label{Eq.Pe2normFast}
\end{align}
Then, define the event
\begin{align}
    \E_0 = \left\{ \left| \sum_{t=1}^n G_t(\bar{u}_{i,t} - \bar{u}_{j,t})\bar{Z}_t \right| > \frac{\delta_n}{2} \right\} \;.\,
    \label{Eq.E0FadingFast}
\end{align}
By Chebyshev's inequality, the probability of this event vanishes, 
\begin{align}
    \Pr(\E_0)&\leq 
     \frac{\sum_{t=1}^n (\bar{u}_{i,t}-\bar{u}_{j,t})^2\mathbb{E}\{G_t^2\}\mathbb{E}\{\bar{Z}_t^2\}}{\left(\frac{\delta_n}{2} \right)^2}
    \nonumber\\
    & =\frac{\sigma^2_Z (\sigma^2_G+\mu_G^2)\sum_{t=1}^n (\bar{u}_{i,t}-\bar{u}_{j,t})^2}{n\left(\frac{\delta_n}{2} \right)^2}
    \nonumber\\
    & =
    \frac{4\sigma^2_Z (\sigma^2_G+\mu_G^2)\norm{\bar{\fu}_i-\bar{\fu}_j}^2}{n\delta_n^2} \;,\,
    \label{Eq.PeE0G1Fast}
\end{align}
where the first inequality holds since the sequences $\{\bar{Z}_t\}$ and $\{G_t\}$ are i.i.d. $\sim\mathcal{N}\left(0,\frac{\sigma_Z^2}{n}\right)$ and $\sim f_{G}$ with
\begin{align}
   \mathbb{E}\{G_t\}=\mu_G \quad \text{and} \quad \mathbb{E}\{G_t^2\}=\sigma_G^2+\mu_G^2 \;.\,
\end{align}
By the triangle inequality,
\begin{align}
    \norm{\bar{\fu}_i-\bar{\fu}_j}^2 & \leq (\norm{\bar{\fu}_i} + \norm{\bar{\fu}_j})^2
    \nonumber\\
    & \leq \left( \sqrt{A} + \sqrt{A} \right)^2
    \nonumber\\
    & = 4A \;,\,
 \end{align}
 hence
\begin{align}
    \Pr(\E_0) & \leq
    \frac{16A\sigma^2_Z (\sigma_G^2+\mu_G^2)}{n\delta_n^2}
    \nonumber\\
    & = \frac{144\sigma^2_Z (\sigma_G^2+\mu_G^2)}{\gamma^4 A n^b}
    \nonumber\\
    & \leq \eta_1 \;,\,
    \label{Eq.PeE0G}
\end{align}
for sufficiently large $n$, with arbitrarily small $\eta_1>0$. Furthermore, observe that given the complementary event $\E_0^c$, we have 
\begin{align}
    2\sum_{t=1}^n G_t(\bar{u}_{i,t}-\bar{u}_{j,t})\bar{Z}_t\geq -\delta_n \;,\,    
\end{align}
Therefore, the event $\E_0^c$, the type II error event in (\ref{Eq.Pe2GFast}), and the identity in (\ref{Eq.Pe2normFast}) together imply that the following event occurs,
\begin{align}
    \E_1 & = \left\{ \norm{\fG\circ(\bar{\fu}_i-\bar{\fu}_j)}^2+\norm{\bar{\fZ}}^2\leq \sigma_Z^2+2\delta_n \right\} \;.\,
    \label{Eq.Pe2normConsequenceFast}
\end{align}
Now lets define
\begin{align}
    \G_{i,j}^n = \left\{ \fG \in \G^n \;:\, \norm{\fG\circ(\bar{\fu}_i-\bar{\fu}_j)+\bar{\fZ}}^2 \leq \sigma_Z^2 + \delta_n \right\} \;.\,
\end{align}
Therefore, applying the law of total probability, we obtain
\begin{align}
    P_{e,2}(i,j) & \stackrel{(a)}{=} \Pr\left( \G_{i,j}^n \; \cap \; \E_0 \right) + \Pr\left( \G_{i,j}^n \; \cap \; \E_0^c \right)
    \nonumber\\
    & \stackrel{(b)}{\leq} \eta_1 + \Pr(\E_1) \;,\,
    \label{Eq.Pe2_P_Expanded_L}
\end{align}
where $(a)$ is due to (\ref{Eq.Pe2GFast}) and $(b)$ holds since each probability is bounded by 1.

Based on the codebook construction, each codeword is surrounded by a sphere of radius $\sqrt{\epsilon_n}$, which implies that
\begin{align}
    \norm{\bar{\fu}_i-\bar{\fu}_j} \geq \sqrt{\epsilon_n} \;.\,
\end{align}
Then, by (\ref{Ineq.FadingCoeff}),
\begin{align}
    \norm{\fG\circ(\bar{\fu}_i-\bar{\fu}_j)}^2 & \geq \gamma^2 \norm{\bar{\fu}_i-\bar{\fu}_j}^2
    \nonumber\\
    & \geq \gamma^2\epsilon_n \;,\,
\end{align}
where
$\gamma$ is the minimal value in ${\G}$.
Hence, according to (\ref{Eq.Pe2_P_Expanded_L}),
\begin{align}
    P_{e,2}(i,j) 
    & \leq \Pr\left( \norm{\bar{\fZ}}^2\leq \sigma_Z^2+2\delta_n-\gamma^2\epsilon_n
    \right) + \eta_1
    \nonumber\\
    & \leq \Pr\left( \norm{\bar{\fZ}}^2\leq \sigma_Z^2 -\delta_n
    \right) + \eta_1 \;,\,
\end{align}
where the last line holds, since $2\delta_n-\gamma^2\epsilon_n=-\delta_n$ by (\ref{Eq.beta_n}).
Therefore, by Chebyshev's inequality,
\begin{align}
    P_{e,2}(i,j) & \leq \Pr\left(\sum_{t=1}^n \bar{Z}_t^2-\sigma_Z^2 \leq -\delta_n \right)  + \eta_1
    \nonumber\\
    & \leq \frac{\sum_{t=1}^n\text{var}(\bar{Z}_t^2)}{\delta_n^2}+ \eta_1
    \nonumber\\
    & \leq \frac{\sum_{t=1}^n \mathbb{E}\{\bar{Z}_t^4\}}{\delta_n^2}+ \eta_1
    \nonumber\\
    & = \frac{ 3n \left(\frac{\sigma_Z^2}{n}\right)^2}{\delta_n^2} + \eta_1
    \nonumber\\
    & = \frac{27\sigma_Z^4}{\gamma^4 A^2n^b} + \eta_1
    \nonumber\\
    & \leq \lambda_2 \;,\,
\end{align}
for sufficiently large $n$, where $\lambda_2$ is arbitrarily small.

We have thus shown that for every $\lambda_1,\lambda_2>0$ and sufficiently large $n$, there exists a $(2^{n\log (n)R},n,\lambda_1,\lambda_2)$ code.
As we take the limits of $n\rightarrow\infty$, and then $b \rightarrow 0$, the lower bound on the achievable rate tends to $\frac{1}{4}$, by (\ref{Eq.RateFast}). This completes the achievability proof for Theorem~\ref{Th.GDICapacityFast}.
\qed
\subsection{Upper Bound (Converse Proof)}
\label{Subsec.ConvFast}
We show that the capacity is bounded by $\mathbb{C}_{DI}(\sG_{\text{fast}},L)\leq 1$. We note that in the converse proof, we do \emph{not} normalize the sequences. Suppose that $R$ is an achievable rate in the $L$-scale for the Gaussian channel with fast fading. Consider a sequence of $(L(n,R),n,\lambda_1^{(n)},\lambda_2^{(n)})$ codes $(\U^{(n)},\D^{(n)})$ such that $\lambda_1^{(n)}$ and $\lambda_2^{(n)}$ tend to zero as $n\rightarrow\infty$. We begin with the following lemma.
\begin{lemma}
\label{Lem.DConverseFast}
Consider a sequence of codes  as described above.
Let $b>0$ be an arbitrarily small constant that does not depend on $n$.
%
 Then there exists $n_0(b)$, such that for all $n>n_0(b)$, every pair of codewords in the codebook $\U^{(n)}$ are distanced by at least $\sqrt{n\epsilon_n}$, i.e.,
    \begin{subequations}
    \begin{align}
     \norm{\fu_{i_1} - \fu_{i_2}}\geq \sqrt{n\epsilon_n} \;,\,
    \end{align}
    where
    \begin{align}
        \epsilon_n = \frac{A}{n^{2(1+b)}} \;,\,
    \end{align}
    \end{subequations}
for all $i_1,i_2\in [\![L(n,R)]\!]$ such that $i_1\neq i_2$.
\end{lemma}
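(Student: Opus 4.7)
My plan is to argue by contradiction. Suppose, for some $n$, the codebook $\U^{(n)}$ contains two codewords $\fu_{i_1}, \fu_{i_2}$ with $\norm{\fu_{i_1}-\fu_{i_2}} < \sqrt{n\epsilon_n}$. I will show that simultaneously satisfying the type~I constraint for message $i_2$ and the type~II constraint for the pair $(i_1,i_2)$ forces $1-\lambda_1^{(n)}-\lambda_2^{(n)}$ to be at most a quantity that vanishes with $n$, contradicting $\lambda_1^{(n)},\lambda_2^{(n)}\to 0$.

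First I would unpack the constraints (\ref{Eq.GTypeIErrorDefFast})--(\ref{Eq.GTypeIIErrorDefFast}). Writing $\mu_{\fg,\fu}$ for the $\N(\fg\circ\fu,\sigma_Z^2 I_n)$ probability measure on $\mathbb{R}^n$, the conditions $P_{e,1}(i_2)\le\lambda_1^{(n)}$ and $P_{e,2}(i_1,i_2)\le\lambda_2^{(n)}$ read
\begin{align}
\int f_{\fG}(\fg)\,\mu_{\fg,\fu_{i_2}}(\D_{i_2,\fg})\,d\fg &\ge 1-\lambda_1^{(n)}, \\
\int f_{\fG}(\fg)\,\mu_{\fg,\fu_{i_1}}(\D_{i_2,\fg})\,d\fg &\le \lambda_2^{(n)}.
\end{align}
Subtracting and using that, for any measurable $\D$, $\mu_{\fg,\fu_{i_2}}(\D)-\mu_{\fg,\fu_{i_1}}(\D)$ is bounded by the total variation distance between the two measures, I obtain
\begin{align}
1-\lambda_1^{(n)}-\lambda_2^{(n)} \;\le\; \int f_{\fG}(\fg)\,d_{\mathrm{TV}}\!\left(\mu_{\fg,\fu_{i_2}},\,\mu_{\fg,\fu_{i_1}}\right)d\fg.
\end{align}

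Next I would invoke Pinsker's inequality together with the closed-form KL divergence between two $n$-dimensional Gaussians sharing the covariance $\sigma_Z^2 I_n$, namely $D_{\mathrm{KL}}(\mu_{\fg,\fu_{i_2}}\,\|\,\mu_{\fg,\fu_{i_1}}) = \norm{\fg\circ(\fu_{i_1}-\fu_{i_2})}^2/(2\sigma_Z^2)$, which yields
\begin{align}
d_{\mathrm{TV}}\!\left(\mu_{\fg,\fu_{i_2}},\,\mu_{\fg,\fu_{i_1}}\right) \;\le\; \frac{\norm{\fg\circ(\fu_{i_1}-\fu_{i_2})}}{2\sigma_Z}.
\end{align}
Since $\G$ is bounded, set $g_{\max}\equiv\sup_{g\in\mathrm{cl}(\G)}|g|<\infty$, so that $|G_t|\le g_{\max}$ with probability one for every $t$, and consequently $\norm{\fG\circ(\fu_{i_1}-\fu_{i_2})}\le g_{\max}\norm{\fu_{i_1}-\fu_{i_2}}$. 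Combining this deterministic bound with the standing assumption $\norm{\fu_{i_1}-\fu_{i_2}}<\sqrt{n\epsilon_n}=\sqrt{A}/n^{(1+2b)/2}$ yields
\begin{align}
1-\lambda_1^{(n)}-\lambda_2^{(n)} \;\le\; \frac{g_{\max}\sqrt{A}}{2\sigma_Z\, n^{(1+2b)/2}}.
\end{align}
The right-hand side tends to $0$ as $n\to\infty$ because $b>0$, while the left-hand side tends to $1$ by assumption on the codes. This contradiction forces the minimum-distance property to hold for all $n>n_0(b)$ large enough that the right-hand side drops below, say, $\tfrac12$.

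The main technical obstacle is identifying the right functional-analytic handle to convert the operational error conditions, which refer to the \emph{unknown} decoding regions $\D_{i_2,\fg}$, into a statement about the geometric separation of codewords. The total-variation/Pinsker route is the key, because it gives a bound that is uniform over all measurable decoding regions and reduces the problem to controlling $\norm{\fG\circ(\fu_{i_1}-\fu_{i_2})}$, which in turn is tamed purely by the boundedness of $\G$. A minor point worth checking but not difficult is that the pointwise bound is deterministic in $\fg$, so that the $\mathbb{E}_{\fG}$ integration passes through without invoking any additional moment property of $f_G$.
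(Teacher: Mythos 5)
Your proposal is correct, but it takes a genuinely different and noticeably shorter route than the paper's own proof. You convert the two operational constraints into a statement that is uniform over the unknown decoding regions by bounding $\mu_{\fg,\fu_{i_2}}(\D_{i_2,\fg})-\mu_{\fg,\fu_{i_1}}(\D_{i_2,\fg})$ with the total variation distance, then invoke Pinsker's inequality together with the exact Gaussian KL formula, and finally use the boundedness of $\G$ to get the deterministic bound $d_{\mathrm{TV}}\le g_{\max}\norm{\fu_{i_1}-\fu_{i_2}}/(2\sigma_Z)=O\bigl(n^{-(1+2b)/2}\bigr)$ pointwise in $\fg$, so the expectation over the fading passes through trivially. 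The paper performs essentially the same indistinguishability argument by hand: it uses Markov's inequality with $\mathbb{E}\{G^2\}$ to restrict attention to fading realizations with $\norm{\fg\circ(\fu_{i_1}-\fu_{i_2})}\le\delta_n$, restricts the output to a ball of radius $\sqrt{n(\sigma_Z^2+\zeta)}$ around $\fg\circ\fu_{i_2}$ (discarding a Chebyshev-small tail), shows on that region that the two shifted Gaussian densities differ by at most a multiplicative factor $\kappa$, and concludes $\lambda_1+\lambda_2\ge 1-3\kappa$, a contradiction. Your route buys brevity and an explicit vanishing rate for $1-\lambda_1^{(n)}-\lambda_2^{(n)}$; the paper's route avoids the TV/Pinsker machinery and, because it controls the fading only through its second moment, would carry over to unbounded fading distributions, whereas your $g_{\max}$ step leans on the model assumption that $\G$ is bounded — an assumption the paper does make, so your argument is legitimate here but slightly less robust.
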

\begin{proof}
    Fix $\lambda_1$ and $\lambda_2$. Let $\kappa,\theta,\zeta>0$ be arbitrarily small.
    Assume to the contrary that 
    there exist two messages $i_1$ and $i_2$, where $i_1\neq i_2$, such that
    \begin{align}
        \label{Eq.Alpha_nFast}
        \norm{\fu_{i_1} - \fu_{i_2}}< \sqrt{n\epsilon_n} = \alpha_n \;,\,
    \end{align}
    where
    \begin{align}
     \alpha_n = \frac{\sqrt{A}}{n^{\frac{1}{2}(1+2b)}} \;.\,
    \end{align}
    %

   Observe that 
   \begin{align}
       \mathbb{E}\left\{ \norm{\fG\circ(\fu_{i_1} - \fu_{i_2})}^2 \right\} & = \sum_{t=1}^n \mathbb{E}\left\{G_t^2\right\} \left( \fu_{i_1,t} - \fu_{i_2,t}\right)^2
       \nonumber\\
       & = \mathbb{E}\left\{G^2\right\} \norm{\fu_{i_1} - \fu_{i_2}}^2 \;,\,
   \end{align}
  and consider the subset
   \begin{align}
        \label{Eq.Ai1i2}
           \A_{i_1,i_2}=\{ \fg\in\G^n \,:\; \norm{\fg\circ(\fu_{i_1} - \fu_{i_2})}> \delta_n  \} \;,\,
   \end{align}
   where
   \begin{align}
        \label{Eq.Deltan_Fast}
        \delta_n = \frac{\sqrt{A}}{n^{\frac{1}{2}(1+b)}} \;.\,
    \end{align}
   By Markov's inequality, the probability that the fading sequence $\fG$ belongs to this set is bounded by 
   \begin{align}
       \label{Ineq.A_Event_Fast}
       \Pr\left(\fG\in \A_{i_1,i_2}\right) & = \Pr(\norm{\fG\circ(\fu_{i_1} - \fu_{i_2})}^2 > \delta_n^2)
       \nonumber\\
       & \stackrel{(a)}{\leq} 
       \frac{\mathbb{E}\{G^2\} \norm{\fu_{i_1} - \fu_{i_2}}^2}{\delta_n^2} \nonumber\\
       & \stackrel{(b)}{\leq}  \frac{\mathbb{E}\{G^2\} \alpha_n^2}{\delta_n^2}
       \nonumber\\
       & = \frac{\mathbb{E}\{G^2\}}{n^{b}}
       \nonumber\\
        & \leq \kappa \;,\,
       %
       %
   \end{align}
   for sufficiently large $n$ where $(a)$ holds since the sequence $\left\{G_t\right\}_{t=1}^n$ is i.i.d. and $(b)$ is due to (\ref{Eq.Alpha_nFast}).
   
    Then, observe that 
    \begin{align}
        1-P_{e,1}(i_1)&= \int_{\G^n} f_{\fG}(\fg)\bigg[ \int_{\D_{i_1,\fg}} f_{\fZ}(\fy-\fg\circ\fu_{i_1}) d\fy  \bigg] d\fg
        \nonumber\\
        &\leq
        \int_{\A_{i_1,i_2}^c}  f_{\fG}(\fg)\bigg[ \int_{\D_{i_1,\fg}} f_{\fZ}(\fy-\fg\circ\fu_{i_1}) d\fy  \bigg] d\fg+\Pr(\fG\in\A_{i_1,i_2})
        \nonumber\\
        &\leq
        \int_{\A_{i_1,i_2}^c}  f_{\fG}(\fg)\bigg[ \int_{\D_{i_1,\fg}} f_{\fZ}(\fy-\fg\circ\fu_{i_1}) d\fy  \bigg] d\fg+\kappa
    \end{align}
Now let us define two events as follows
    \begin{align}
        \B_{i_1,i_2} & = \left\{ \fy \in \D_{i_1,\fg} \,:\; 
        \norm{ \fy - \fg \circ \fu_{i,2} } \leq \sqrt{ n \left( \sigma_Z^2 + \zeta \right)} \right\} \;,\,
        \label{Eq.Event_B_Fast}
    \end{align}
    \begin{align}
        \label{Eq.Event_C_Fast}
        \C_{i_1,i_2} & = \left\{ \fy \in \Y^n \,:\;
        \norm{\fy-\fg\circ\fu_{i,2}} \leq \sqrt{n(\sigma_Z^2+\zeta)} \right\} \;.\,
    \end{align}
    Hence,
    \begin{align}
        1-\kappa-P_{e,1}(i_1) & \leq \int_{\A_{i_1,i_2}^c} f_{\fG}(\fg)\left[ \int_{\D_{i_1,\fg}} f_{\fZ}(\fy-\fg\circ\fu_{i_1}) d\fy \right] d\fg
        \nonumber\\
        & = \int_{\A_{i_1,i_2}^c} f_{\fG}(\fg) \left[ \int_{\B_{i_1,i_2}} f_{\fZ}(\fy-\fg\circ\fu_{i_1}) d\fy + \int_{\D_{i_1,\fg} \setminus \B_{i_1,i_2}} f_{\fZ}(\fy-\fg\circ\fu_{i_1}) d\fy \right] d\fg
        \nonumber\\
        & \leq \int_{\A_{i_1,i_2}^c} f_{\fG}(\fg) \left[ \int_{\B_{i_1,i_2}} f_{\fZ}(\fy-\fg\circ\fu_{i_1}) d\fy + \int_{\C_{i_1,i_2}^c} f_{\fZ}(\fy-\fg\circ\fu_{i_1}) d\fy \right] d\fg \;.\,
        \label{Eq.Pe1boundConv0Fast}
        \end{align}
        where the last inequality holds since
        \begin{align}
            \C_{i_1,i_2}^c \supset \D_{i_1,\fg} \setminus \B_{i_1,i_2} \;,\,
        \end{align}
        with $\setminus$ being the set minus operation. Consider the second integral, where the domain is $\C_{i_1,i_2}$. Then, by the triangle inequality
        \begin{align}
           \norm{\fy-\fg\circ\fu_{i,1}}&\geq \norm{\fy-\fg\circ\fu_{i,2}}-\norm{\fg\circ(\fu_{i,1}-\fu_{i,2})}
            \nonumber\\
            &> \sqrt{n(\sigma_Z^2+\zeta)}-\norm{\fg\circ(\fu_{i,1}-\fu_{i,2})}
            \nonumber\\ &\geq\sqrt{n(\sigma_Z^2+\zeta)}-\delta_n \;,\,
        \end{align}
        for every $\fg\in\A_{i_1,i_2}^c$ (see (\ref{Eq.Alpha_nFast})).
        For sufficiently large $n$, this implies the following email
        \begin{align}
           \F_{i_1,i_2}^c = \left\{y^n \in \Y^n \; : \, \norm{\fy - \fg \circ\fu_{i,1}} > \sqrt{n\left( \sigma_Z^2 + \eta \right)} \right\} \;,\,
            \label{Eq.Regiong0}
        \end{align}
        for $\eta<\frac{\zeta}{2}$. 
        That is,
        \begin{align}
            \left\{\fy \in \Y^n \; : \, \norm{\fy-\fg\circ\fu_{i,2}} \geq
        \sqrt{n(\sigma_Z^2+\zeta)} \right\} \quad \overset{\text{implies}}{\longrightarrow} \quad \left\{\fy \in \Y^n \; : \, \norm{\fy-\fg\circ\fu_{i,1}} \geq
        \sqrt{n(\sigma_Z^2+\eta)} \right\} \;.\,
        \end{align}
        Thus, we deduce that for every $\fg\in\A_{i_1,i_2}^c$, 
        \begin{align}
            \F_{i_1,i_2}^c \supset \C_{i_1,i_2}^c \;,\,
        \end{align}
        Hence, the second integral in the right hand side of (\ref{Eq.Pe1boundConv0Fast}) is bounded by
        \begin{align}
            \int_{\F_{i_1,i_2}^c} f_{\fZ}\left( \fy - \fg \circ \fu_{i_1} \right) d\fy & = \Pr \left( \norm{\fy - \fg \circ\fu_{i,1}} > \sqrt{n\left( \sigma_Z^2 + \eta \right)} \right)
            \nonumber\\
            & = \Pr\left( \norm{\fZ}^2 - n\sigma_Z^2 > n\eta \right)
            \nonumber\\
            & = \Pr\left( \norm{\fZ}^2 - n\sigma_Z^2 > n\eta \right)
            \nonumber\\
            & \leq \frac{3\sigma_Z^4}{n\eta^2}
            \nonumber\\
            & \leq \kappa \;,\,
        \end{align}
        for large $n$,  where the third line is due to Chebyshev's inequality, followed by the substitution of
        $\fz\equiv \fy-\fg\circ\fu_{i_1} $.
   Thus, by (\ref{Eq.Pe1boundConv0Fast}),
    \begin{align}
    \label{Eq.ComplTypeIFast}
     1-2\kappa-P_{e,1}(i_1) \leq\int_{\A_{i_1,i_2}^c} f_{\fG}(\fg)\left[ \int_{\B_{i_1,i_2}} f_{\fZ}(\fy-\fg\circ\fu_{i_1}) d\fy \right] d\fg \;.\,
    \end{align}
    
    Now, we can focus on the inner integral with domain of $\B_{i_1,i_2}$, i.e., when
    \begin{align}
        \norm{\fy-\fg\circ\fu_{i,2}}\leq\sqrt{n(\sigma_Z^2+\zeta)} \;.\,
        \label{Eq.ui2DistFast}
    \end{align}
    Observe that
    \begin{align}
        f_{\fZ}(\fy-\fg\circ\fu_{i_1}) - f_{\fZ}(\fy-\fg\circ\fu_{i_2})=  f_{\fZ}(\fy-\fg\circ\fu_{i_1})\left[1-e^{-\frac{1}{2\sigma_Z^2}\left(\norm{\fy-\fg\circ\fu_{i_2}}^2-\norm{\fy-\fg\circ\fu_{i_1}}^2\right)}\right] \;.\,
    \end{align}
    By the triangle inequality,
    \begin{align}
        \label{Ineq.triangleFast}
        \norm{\fy-\fg\circ\fu_{i_1}}\leq  \norm{\fy-\fg\circ\fu_{i_2}} +  \norm{\fg\circ(\fu_{i_1} - \fu_{i_2})} \;.\,
    \end{align}
    Taking the square of both sides, we have
    \begin{align}
        \norm{\fy-\fg\circ\fu_{i_1}}^2 &\leq \norm{\fy-\fg\circ\fu_{i_2}}^2 +  \norm{\fg\circ(\fu_{i_2} - \fu_{i_1})}^2 +
        2\norm{\fy-\fg\circ\fu_{i_2}}\cdot \norm{\fg\circ(\fu_{i_2} - \fu_{i_1})}
        \nonumber\\
        &\leq
        \norm{\fy-\fg\circ\fu_{i_2}}^2+ \delta_n^2+ 2\delta_n\sqrt{n(\sigma_Z^2+\zeta)}
        \nonumber\\
        &=
       \norm{\fy-\fg\circ\fu_{i_2}}^2+ \delta_n^2+
       \frac{2\sqrt{A(\sigma_Z^2+\zeta)}}{n^{\frac{b}{2}}} \;,\,
    \end{align}
    where the last inequality follows from the definition of $\A_{i_1,i_2}$ and $\B_{i_1,i_2}$ according to (\ref{Eq.Ai1i2})
    and (\ref{Eq.Event_B_Fast}), respectively.
    Thus, for sufficiently large $n$,
    \begin{align}
        \norm{\fy-\fg\circ\fu_{i_1}}^2-\norm{\fy-\fg\circ\fu_{i_2}}^2\leq \theta \;.\,
    \end{align}
    Hence,
    \begin{align}
        \label{Ineq.GaussianContinuityFast}
        f_{\fZ}(\fy-\fg\circ\fu_{i_1}) - f_{\fZ}(\fy-\fg\circ\fu_{i_2})&\leq  f_{\fZ}(\fy-\fg\circ\fu_{i_1})(1-e^{-\frac{\theta}{2\sigma_Z^2}})
        \nonumber\\
        &\leq \kappa f_{\fZ}(\fy-\fg\circ\fu_{i_1}) \;,\,
    \end{align}
    for sufficiently small $\theta>0$ such that $1-e^{-\frac{\theta}{2\sigma_Z^2}}\leq \kappa$.
 Now by (\ref{Eq.ComplTypeIFast}) we get
    \begin{align}
       \lambda_1+\lambda_2 &\geq 
       P_{e,1}(i_1) + P_{e,2}(i_2,i_1) 
       \nonumber\\
       &\geq 1-2\kappa - \int_{\A_{i_1,i_2}^c}  f_{\fG}(\fg)\left[
       \int_{\B_{i_1,i_2}}    f_{\fZ}(\fy-\fg\circ\fu_{i_1}) d\fy 
       \right]d\fg + \int_{\G^n}  f_{\fG}(\fg)\left[ \int_{\D_{i_1,\fg}} f_{\fZ}(\fy-\fg\circ\fu_{i_2}) \, d\fy \right] d\fg
       \nonumber\\
       &\geq 1- 2\kappa -\int_{\A_{i_1,i_2}^c}  f_{\fG}(\fg)\left[
       \int_{\B_{i_1,i_2}}    f_{\fZ}(\fy-\fg\circ\fu_{i_1}) d\fy
       \right]d\fg + \int_{\A_{i_1,i_2}^c} \hspace{-2.5mm} f_{\fG}(\fg)\bigg[ \int_{\B_{i_1,i_2}} f_{\fZ}(\fy-\fg\circ\fu_{i_2}) \, d\fy \bigg] d\fg
       \nonumber\\
        &\geq 1- 2\kappa -\int_{\A_{i_1,i_2}^c}  f_{\fG}(\fg)\bigg[
       \int_{\B_{i_1,i_2}}    (f_{\fZ}(\fy-\fg\circ\fu_{i_1})-f_{\fZ}(\fy-\fg\circ\fu_{i_2})) d\fy
       \bigg]d\fg \;.\,
    \end{align}
    Hence, by (\ref{Ineq.GaussianContinuityFast}),
    \begin{align}
        \lambda_1 + \lambda_2 
         &\geq 1- 2\kappa -\kappa\int_{\A_{i_1,i_2}^c}  f_{\fG}(\fg)\bigg[
       \int_{\B_{i_1,i_2}}    f_{\fZ}(\fy-\fg\circ\fu_{i_1}) d\fy
       \bigg]d\fg
       \nonumber\\
       & \geq 1-3\kappa \;,\,
    \end{align}
    which leads to a contradiction for sufficiently small  $\kappa$ such that $3\kappa<1-\lambda_1-\lambda_2$. This completes the proof of Lemma~\ref{Lem.DConverseFast}.
    \end{proof}
By Lemma~\ref{Lem.DConverseFast}, we can define an arrangement of non-overlapping spheres $\S_{\fu_i}(n,\sqrt{n\epsilon_n})$ of radius $\sqrt{n\epsilon_n}$ centered at the codewords $\fu_i$.
Since the codewords all belong to a sphere $\S_{\f0}(n,\sqrt{nA})$ of radius $\sqrt{nA}$ centered at the origin, it follows that the number of packed spheres, i.e., the number of codewords $2^{n\log(n)R}$, is bounded by
\begin{align}
    2^{n\log(n)R} & \leq \frac{\text{Vol}(\S_{\f0}(n,\sqrt{nA}+\sqrt{n\epsilon_n}))}{\text{Vol}(\S_{\fu_i}(n,\sqrt{n\epsilon_n}))}
    \nonumber\\
    & = \left(\frac{\sqrt{A}+\sqrt{\epsilon_n}}{\sqrt{\epsilon_n}}\right)^n \;.\,
\end{align}
Thus,
\begin{align}
    \label{Ineq.Rate_Fast}
     R &\leq \frac{1}{\log n} \log\left(\frac{\sqrt{A}+\sqrt{\epsilon_n}}{\sqrt{\epsilon_n}}\right)
     \nonumber\\
     & = \frac{\log\left(1+n^{1+b}\right)}{\log n}
     \nonumber\\
     & = \frac{\log \left( n^{1+b} \left( 1 + \frac{1}{n^{1+b}} \right) \right)}{\log n}
    \nonumber\\
    & = \frac{(1 + b) \log n + \log \left( 1 + \frac{1}{n^{1 + b}} \right) }{\log n}
    \nonumber\\
    & = 1 + b + \frac{\log \left( 1 + \frac{1}{n^{1 + b}} \right) }{\log n}
     \;,\,
\end{align}
which tends to $1+b$ as $n \to \infty$. Therefore, $R \leq 1 + b$. Now, since $b>0$ is arbitrarily small, an achievable rate must satisfy $R \leq 1$. This completes the proof of Theorem~\ref{Th.GDICapacityFast}.
\qed
\section{The Gaussian Channel with Slow Fading}
\label{Sec.GaussianChannelSlow}

In this section, we consider the Gaussian channel $\sG_{\,\text{slow}}$ with slow fading, specified by the input-output relation
\begin{align}
    Y_t = Gx_t + Z_t \;,\,
\end{align}
where $G$ is a continuous random variable $ \sim f_G(g)$. Suppose that the values of $G$ belong to a set $\G$, and that $G$ has finite expectation, and finite variance  $\text{var}(G)>0$. 
with additive white Gaussian noise, i.e.,
where the noise sequence $\fZ$ is i.i.d. $\sim \mathcal{N}(0,\sigma_Z^2)$. The transmission power is limited to 
$\norm{\textbf{x}}^2\leq nA$.
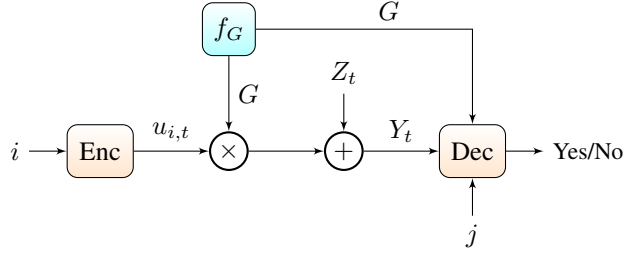
\begin{figure}[htb]
    \centering
	\tikzstyle{l} = [draw, -latex']

\tikzstyle{farbverlauf} = [ top color=white, bottom color=white!80!gray]
\tikzstyle{block1} = [draw,top color=white, bottom color=orange!20!white, rectangle, rounded corners,
minimum height=2em, minimum width=2.5em]

\tikzstyle{block2} = [draw,top color=white, bottom color=white!80!orange, rectangle, rounded corners,
minimum height=2em, minimum width=2.5em]

\tikzstyle{block3} = [draw,top color=white, bottom color=cyan!30, rectangle, rounded corners,
minimum height=2em, minimum width=2em]

\tikzstyle{input} = [coordinate]
\tikzstyle{sum} = [draw, circle,inner sep=0pt, minimum size=5mm,  thick]
\tikzstyle{multiplexer} = [draw, circle,inner sep=0pt, minimum size=5mm,  thick]
\tikzstyle{arrow}=[draw,->]
\begin{tikzpicture}[auto, node distance=2cm,>=latex']
\node[] (M) {$i$};
\node[block1,right=.5cm of M] (enc) {Enc};
\node[multiplexer, right=1cm of enc] (multiplexer) {$\times$};
\node[sum, right=1cm of multiplexer] (channel) {$+$};

\node[block2, right=1cm of channel] (dec) {Dec};
\node[below=.5cm of dec] (Target) {$j$};
\node[right=.5cm of dec] (Output) {$\text{\small Yes/No}$};

\node[block3,above=1cm of multiplexer] (gain) {$f_G$};
\node[above=.5cm of channel] (noise) {$Z_t$};

\path [l] (gain) -| node [text width=2.5cm,above] {$G$} (dec) ;

\draw[->] (M) -- (enc);
\draw[->] (enc) --node[above]{$u_{i,t}$} (multiplexer);
\draw[->] (multiplexer) -- (channel);
\draw[->] (gain) -- node [] {$G$} (multiplexer);
\draw[->] (noise) -- (channel);
\draw[->] (channel) --node[above]{$Y_t$} (dec);

\draw[->] (dec) -- (Output);
\draw[->] (Target) -- (dec);

\end{tikzpicture}
	\caption{Deterministic identification for the Slow Fading Gaussian channel. The fading sequence remains constant during the transmission period.}
	\label{Fig.GaussianChannelSlow}
\end{figure}

\subsection{Coding with Slow Fading}
We move to the Gaussian channel with slow fading. In the compound channel model, we consider the worst-case channel and the error is maximized over the set of the values of the fading coefficients (see \cite[Sec. 23.3.1]{GK12}). As a result, we will show that the capacity is infinite as long as 
the set of fading values $\G$ does not include zero.

The definition of DI codes with CSI available at the decoder is given below.
\begin{definition}
\label{GdeterministicIDCode}
 An $(L(n,R),n)$ DI code  for a Gaussian channel $\sG_{\,\text{slow}}$ with CSI at the decoder, assuming $L(n,R)$ is an integer, is defined as a system $(\U,\mathscr{D})$ which consists of a codebook $\U=\{ \mathbf{u}_i \}_{i\in[\![L(n,R)]\!]}$, $\U\subset \mathbb{R}^n$,
 such that
 \begin{align} 
 \norm{\mathbf{u}_i}^2\leq nA \;,\,
 \end{align}
 for all $i\in[\![L(n,R)]\!]$ and a collection of decoding regions
\begin{align}
    \mathscr{D}=\{ \D_{i,g} \} \;,\,
\end{align}
for $i\in[\![L(n,R)]\!]$ and $g \in \G^n$ with
\begin{align}
    \bigcup_{i=1}^{L(n,R)} \D_{i,g} \subset \mathbb{R}^n \;.\,
\end{align}
 The error probabilities of the identification code $(\U,\mathscr{D})$ are given by
\begin{align}
     P_{e,1}(i) & = \sup_{g\in\G} \left[ 1- \int_{\D_{i,g}}\bigg( \prod_{t=1}^n f_{Z}(y_t-gu_{i,t}) \bigg) d\mathbf{y} \right] \;,\,
     \label{Eq.GTypeIErrorDefSlow}
     \\
     P_{e,2}(i,j) & =
     \sup_{g\in\G} \left[\int_{\D_{j,g}} \bigg(
     \prod_{t=1}^n f_{Z}(y_t-gu_{i,t})\bigg) d\mathbf{y} \right] \;,\,
     \label{Eq.GTypeIIErrorDefSlow}
\end{align}
with $f_{Z}(z)=\frac{1}{(2\pi\sigma_Z^2)^{1/2}} e^{-z^2/2\sigma_Z^2}$ (see Lemma~\ref{Fig.GaussianChannelSlow}).
An $(L(n,R),n,\lambda_1,\lambda_2)$ DI code is defined in a similar manner as for fast fading (see Section~\ref{GdeterministicIDCode})

A rate $R>0$ is called achievable if for every
$\lambda_1,\lambda_2>0$ and sufficiently large $n$, there exists an $(L(n,R),n,\lambda_1,\lambda_2)$ DI code. 
The operational DI capacity of the Gaussian channel is defined as the supremum of achievable rates, and will be denoted by $\mathbb{C}_{DI}(\mathscr{G}_{\,\text{slow}},L)$. 
\end{definition}

\begin{remark}
\label{Rem.Ginclude0}
If the fading coefficients can be zero or arbitrarily close to zero, i.e., $0\in \text{cl}(\G)$, then it immediately follows that the DI capacity is zero. To see this, observe that if $0\in \text{cl}(\G)$, then
\begin{align}
    P_{e,1}(i)+P_{e,2}(j,i) & \geq \left[ 1- \int_{\D_{i,g}} \left( \prod_{t=1}^n f_{Z}(y_t-gu_{i,t}) \right) d\mathbf{y} \right]_{g=0} + \left[ \int_{\D_{i,g}} \bigg( \prod_{t=1}^n f_{Z}(y_t-gu_{j,t})\bigg) d\mathbf{y} \right]_{g=0}
    \nonumber\\
    & = 1 \;.\,
\end{align}
\end{remark}
\subsection{Main Result - Slow Fading}
Our DI capacity theorem for the Gaussian channel with slow fading is stated below.
\begin{theorem}
\label{Th.GDICapacitySlow}
The DI capacity of the Gaussian channel $\sG_{\,\text{slow}}$ with slow fading in the super-exponential scale, i.e., for $L(n,R)=2^{n\log(n)R}$ is bounded by
\begin{align}
    \label{Eq.GDICapacitySlow1}
    \begin{array}{ll}
        \frac{1}{4}\leq\mathbb{C}_{DI}(\sG_{\,\text{slow}},L)\leq 1 & \text{ if } 0 \notin \text{cl}(\G)  \;,\,
        \\
        \mathbb{C}_{DI}(\sG_{\,\text{slow}},L)= 0 &\text{ if } 0 \in \text{cl}(\G) \;.\,
    \end{array}
\end{align}
Hence, the DI capacity is infinite in the exponential scale, if $0 \notin \text{cl}(\G)$,
\begin{align}
    \label{Eq.GDICapacitySlow2}
    & \mathbb{C}_{DI}(\sG_{\,\text{slow}},L) =
    \begin{cases}
        0 & \text{ if } 0 \in \text{cl}(\G) \;,\,
        \\
        \infty & \text{ if } 0 \notin \text{cl}(\G) \;,\,
    \end{cases}
\end{align}
and zero in the double exponential scale, i.e., for $L(n,R)=2^{2^{nR}}$, we have
\begin{align}
    \mathbb{C}_{DI}(\sG_{\,\text{slow}},L) = 0 \;.\, 
\end{align}
\end{theorem}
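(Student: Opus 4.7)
I would first dispose of the easy cases. When $0\in\text{cl}(\G)$, Remark~\ref{Rem.Ginclude0} already shows $P_{e,1}(i)+P_{e,2}(j,i)\geq 1$ (by evaluating the suprema at, or approaching, $g=0$), so no positive rate is achievable in any scale. Once the $2^{n\log n}$-scale bounds $\tfrac{1}{4}\leq \mathbb{C}_{DI}(\sG_{\,\text{slow}},L)\leq 1$ are established in the complementary case $0\notin \text{cl}(\G)$, Lemma~\ref{Lem.MiscScale} immediately yields $\mathbb{C}_{DI}(\sG_{\,\text{slow}},L)=\infty$ in the exponential scale and $\mathbb{C}_{DI}(\sG_{\,\text{slow}},L)=0$ in the double-exponential scale. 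The whole theorem therefore reduces to proving the sandwich $\tfrac{1}{4}\leq \mathbb{C}_{DI}(\sG_{\,\text{slow}},L)\leq 1$ in the nontrivial case.

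The plan for this sandwich is to transplant the proofs of Section~\ref{Subsec.AchievFast} and Section~\ref{Subsec.ConvFast}, replacing the i.i.d.\ fading sequence $\fG$ and its expectations by a single scalar $g$ and a supremum over $g\in\G$. I fix $\gamma>0$ with $|g|\geq \gamma$ for every $g\in\G$, which exists because $0\notin\text{cl}(\G)$. For the lower bound, I would reuse the sphere-packing codebook $\{\bar{\fu}_i\}$ of Section~\ref{Subsec.AchievFast} with $\epsilon_n=A/n^{(1-b)/2}$, and take the decoding region $\D_{j,g}=\{\bar{\fy}\colon \|\bar{\fy}-g\bar{\fu}_j\|\leq \sqrt{\sigma_Z^2+\delta_n}\}$ with $\delta_n=\gamma^2\epsilon_n/3$. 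The type~I error no longer depends on $g$ and collapses to $\Pr(\|\bar{\fZ}\|^2>\sigma_Z^2+\delta_n)$, which vanishes by Chebyshev. For the type~II error I would use the decomposition $\|g(\bar{\fu}_i-\bar{\fu}_j)+\bar{\fZ}\|^2=g^2\|\bar{\fu}_i-\bar{\fu}_j\|^2+\|\bar{\fZ}\|^2+2g\langle \bar{\fu}_i-\bar{\fu}_j,\bar{\fZ}\rangle$ exactly as in the fast-fading proof, combined with $g^2\|\bar{\fu}_i-\bar{\fu}_j\|^2\geq 3\delta_n$ and a Chebyshev estimate on the cross term, producing an achievable rate approaching $\tfrac{1}{4}$ as $n\to\infty$ and $b\to 0$.

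For the upper bound I would mimic Lemma~\ref{Lem.DConverseFast}: assume there exist $i_1\neq i_2$ with $\|\fu_{i_1}-\fu_{i_2}\|<\alpha_n=\sqrt{A}/n^{(1+2b)/2}$, fix any single $g\in\G$, and evaluate both error suprema at that $g$. The chain of inequalities of Lemma~\ref{Lem.DConverseFast} then carries over once the outer integration against $f_{\fG}$ is dropped; the auxiliary set $\A_{i_1,i_2}$ becomes trivial because $|g|\cdot\|\fu_{i_1}-\fu_{i_2}\|\leq |g|\alpha_n\leq \delta_n$ for all sufficiently large $n$ with $g$ fixed, and the Gaussian density-ratio estimate together with the triangle inequality delivers $P_{e,1}(i_1)+P_{e,2}(i_2,i_1)\geq 1-3\kappa$, a contradiction. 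The resulting pairwise-distance bound combined with the sphere-packing volume estimate of Section~\ref{Subsec.ConvFast} gives $R\leq 1+b$, and $b\downarrow 0$ closes the converse.

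The one step I expect to be delicate is making the type~II achievability bound uniform in $g\in\G$ when $\G$ may be unbounded, since the variance of the cross term scales with $g^2$. I plan to handle this by splitting the supremum at a threshold $g^{*}$ of order $1/\sqrt{\epsilon_n}$: for $|g|\leq g^{*}$ the Chebyshev variance bound on the cross term is uniform; for $|g|>g^{*}$ the reverse triangle inequality forces the event $\{\|g(\bar{\fu}_i-\bar{\fu}_j)+\bar{\fZ}\|^2\leq \sigma_Z^2+\delta_n\}$ to imply $\|\bar{\fZ}\|^2\geq \sigma_Z^2+\delta_n$, which has vanishing probability by the same Chebyshev argument as in the type~I analysis. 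All remaining estimates are routine adaptations of the fast-fading calculation.
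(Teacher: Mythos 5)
Your proposal is correct and follows essentially the same route as the paper's proof in Appendix~\ref{App.GDICapacitySlow}: the same sphere-packing codebook and distance decoder for the lower bound, the same minimum-distance lemma plus volume bound for the converse, and Remark~\ref{Rem.Ginclude0} together with Lemma~\ref{Lem.MiscScale} for the degenerate and scale-change statements. Even the "delicate" uniformity-in-$g$ issue you flag is resolved in the paper exactly as you propose, by splitting the type~II analysis according to whether $\norm{g(\bar{\fu}_i-\bar{\fu}_j)}$ exceeds $2\sqrt{\sigma_Z^2+\delta_n}$ (your threshold of order $1/\sqrt{\epsilon_n}$) and using the reverse triangle inequality in the large-$g$ case.
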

The derivation of the result above is similar to that of the proof for fast fading in Section~\ref{Sec.GaussianChannelFast}. For completeness, we give the proof of Theorem~\ref{Th.GDICapacitySlow} in Appendix~\ref{App.GDICapacitySlow}.
\section{Conclusion}
 \label{Sec.SummaryDiscussions}
 To summarize, we consider deterministic identification for Gaussian channels with fast fading and slow fading, where channel side information (CSI) is available at the decoder. We have shown that for Gaussian channels, the number of messages has a very different scale compared to standard results in both transmission and identification settings.
 In particular, the code size scales as $L(n,R)=2^{n\log(n)R}$. 
 We developed lower and upper bounds on the DI capacities $\mathbb{C}_{DI}(\sG_{\text{fast}},L)$ and $\mathbb{C}_{DI}(\sG_{\text{slow}},L)$ in this scale, for Gaussian channels with fast fading and slow fading, respectively. 
 Then, we deduced  that the DI capacity of a Gaussian Channel with fast fading is infinite in the exponential scale and zero in the double-exponential scale, regardless of the noise in the channel. That is,
 \begin{align*}
     \mathbb{C}_{DI}(\sG_{\text{fast}},L) = \mathbb{C}_{DI}(\sG_{\text{slow}},L) = \infty \;,\,
 \end{align*}
for $L(n,R)=2^{nR}$ and
\begin{align*}
    \mathbb{C}_{DI}(\sG_{\text{fast}},L) = \mathbb{C}_{DI}(\sG_{\text{slow}},L) = 0 \;,\,
\end{align*}
for $L(n,R)=2^{2^{nR}}$, provided that the fading coefficients are bounded away from zero.

We give two motivating examples for identification applications. In molecular communications (MC) \cite{NMWVS12,FYECG16}, information is transmitted via chemical signals or molecules. In various environments, e.g., inside the human body, conventional wireless communication with electromagnetic (EM) waves is not feasible or could be detrimental. The research on micro-scale MC for medical applications, such as intra-body networks, is still in its early stages and faces many challenges. MC is a promising contender for future applications such as 6G+ \cite{6G+}.
  
The results have the following geometric interpretation.
At first glance, it may seem reasonable that for the purpose of identification, one codeword could represent two messages.
While identification allows overlap between decoding regions \cite{AADT20,LDB20}, overlap at the encoder is not allowed for deterministic codes. We observe that when two messages are represented by codewords that are close to one another, then identification fails. If the probability of missed identification is upper bounded by $\epsilon_n$, then the  probability of false identification is lower bounded by $1-\delta_n$, where $\epsilon_n$ and $\delta_n$ tend to zero as $n\to\infty$. Hence, low  probability for the type I error comes at the expense of high probability for the type II error, and vice versa.
Thus, deterministic coding imposes the restriction that the codewords need to be distanced from each other. Based on fundamental properties in lattice and group theory \cite{CHSN13}, the optimal packing of non-overlapping spheres of radius $\sqrt{n\epsilon}$ contains an exponential number of spheres, and by decreasing the radius of the codeword spheres, the exponential rate can be made arbitrarily large.
However, in the  derivation of our lower bound, we show achievability of rates in the $2^{n\log(n)}$-scale by using spheres of radius  $\sqrt{n\epsilon_n}\sim n^{1/4}$, which results in $\sim 2^{\frac{1}{4}n\log(n)}$ codewords. 
The decoder in the achievability proof  performs a simple distance tests.

Alternatively, one may consider the $\epsilon$-capacity, for a fixed $0<\epsilon<1$. In the double exponential scale $L(n,R)=2^{2^{nR}}$, a rate $R$ is called $\epsilon$-achievable if there exists an
 $(L(n,R),n,\epsilon,\epsilon)$ code for sufficiently large $n$. The DI $\epsilon$-capacity $\mathbb{C}^{\epsilon}_{DI}(\sG,L)$ is then defined as the supremum of $\epsilon$-achievable rates.
As the DI and RI capacities in the double exponential scale have strong converses \cite{AD89,HV92,Bur94,MasterThesis},
\begin{align}
    {\mathbb{C}}_{RI}^{\epsilon}(\sG,L) &= {\mathbb{C}}_{RI}(\sG,L)=
    \frac{1}{2}\log\left(1+\frac{A}{\sigma_Z^2}\right) \;,\,
    \\
    {\mathbb{C}}_{DI}^{\epsilon}(\sG,L) &={\mathbb{C}}_{DI}(\sG,L)= 0 \;,\,
    \intertext{for  $0<\epsilon<\frac{1}{2}$. On the other hand, for $\epsilon\geq \frac{1}{2}$,\cite{BV00,AD89} we have}
 \label{Eq.InfiniteCapacity}
    {\mathbb{C}}_{DI}^{\epsilon}(\sG,L) &= {\mathbb{C}}_{RI}^{\epsilon}(\sG,L)= \infty \;.\,
\end{align}

\section*{Acknowledgments}
We gratefully thank Andreas Winter (Universitat Aut\`{o}noma de Barcelona) for a useful discussions concerning the scaling of the DI capacity. We thank Ning Cai (ShanghaiTech University) for a discussion on the DI capacity for DMCs. Finally, we thank Robert Schober (Friedrich Alexander University) for discussions and questions about the application of identification theory in molecular communications.

Mohammad J. Salariseddigh, Uzi Pereg, and Christian Deppe  were supported by 16KIS1005 (LNT, NEWCOM). Holger Boche was supported by 16KIS1003K (LTI, NEWCOM) and the BMBF within the national initiative for “Molecular Communications (MAMOKO)” under grant 16KIS0914.
\appendices
\section{Proof of Lemma~\ref{Lem.MiscScale}}
\label{App.MiscScale}
The proof is straightforward. Let $\mathbb{C}_{DI}(\sG_{\text{fast}}, L_0)=c_0$, where $c_0>0$ is a finite number. Then, for every $\lambda_1$, $\lambda_2$, and sufficiently large $n$, there exists an $(M_n,n,\lambda_1,\lambda_2)$ code where the number of message is  
\begin{align}
   \label{Eq.L_0}
   M_n=L_0(n,c_0-\epsilon) \;,\,
\end{align}
 where $\epsilon>0$ is arbitrarily small.


Assume to the contrary that in the $L^{-}$-scale, the DI capacity $\mathbb{C}_{DI}(\sG_{\text{fast}}, L^{-})=c^{-}$ is also finite. Then, the converse for this claim implies that the number of messages is bounded by
\begin{align}
   \label{Eq.L_-}
  M_n\leq L^{-}(n,c^{-}+\epsilon) \;.\,
\end{align}
Hence, by (\ref{Eq.L_0}) and (\ref{Eq.L_-}), 
\begin{align}
    L_0(n,c_0-\epsilon) \leq L^{-}(n,c^{-}+\epsilon) \;,\,
\end{align}
which contradicts the assumption that 
$L^{-}\prec L_0$ (see Definition~\ref{Def.DominatingScale}). Hence, $\mathbb{C}_{DI}(\sG_{\text{fast}}, L^{-})$ is infinite.

Assume to the contrary that in the $L^{+}$-scale, the DI capacity is positive, i.e. $\mathbb{C}_{DI}(\sG_{\text{fast}}, L^{+})=c^{+}>0$. Then, for every $\lambda_1$, $\lambda_2$, and sufficiently large $n$, there exists an $(M_n^{+},n,\lambda_1,\lambda_2)$ code where the number of messages is 
\begin{align}
   \label{Eq.L_+}
  M_n^{+}= L^{+}(n,c^{+}-\epsilon) \;.\,
\end{align}
Now, by the converse part for the $C_0$-scale,
\begin{align}
    M_n^{+}\leq L_0(n,c_0) \;.\,
    \label{Eq.M_+L0}
\end{align}
Hence, by (\ref{Eq.L_+}) and (\ref{Eq.M_+L0}), 
\begin{align}
    L^{+}(n,c^{+}-\epsilon) \leq L_0(n,c_0) \;,\,
\end{align}
which contradicts the assumption that 
$L_0\prec L^{+}$ (see Definition~\ref{Def.DominatingScale}). Hence, $\mathbb{C}_{DI}(\sG_{\text{fast}}, L^{+})$ is zero.
\section{Proof of Theorem~\ref{Th.GDICapacitySlow}}
\label{App.GDICapacitySlow}
\subsection{Lower Bound (Achievability Proof)}
Consider the Gaussian channel $\mathscr{G}_{\,\text{slow}}$ with slow fading. 
Based on Remark~\ref{Rem.Ginclude0}, when $0\in \text{cl}(\G)$, it immediately follows that the DI capacity is zero. Now, suppose that  $0\notin\text{cl}(\G)$. We show here that the DI capacity of the Gaussian channel with slow fading can be achieved using a dense packing arrangement and a simple distance-decoder.
  
A DI code for the Gaussian channel $\sG_{\,\text{slow}}$ with slow fading is constructed as follows.
Since the decoder can normalize the output symbols by $\frac{1}{\sqrt{n}}$, we have an equivalent input-output relation,
\begin{align}
    \bar{Y}_t = G\bar{x}_t + \bar{Z}_t \;,\,
\end{align}
where $G_t=G$ $\sim f_{G}$, and 
the noise sequence $\bar{\fZ}$ is i.i.d. $\sim \mathcal{N}\left(0,\frac{\sigma_Z^2}{n}\right)$, with an input power constraint 
\begin{align}
    \norm{\bar{\fx}} \leq \sqrt{A} \;,\,
\end{align}
with $\bar{\fx}=\frac{1}{\sqrt{n}}\fx$, $\bar{\fZ}=\frac{1}{\sqrt{n}}\fZ$, and $\bar{\fY}=\frac{1}{\sqrt{n}}\fY$.
\subsubsection*{Codebook construction}
As in our achievability proof for the fast fading setting (see Subsection~\ref{Subsec.CodebookConstructionGaussian}), we use a packing arrangement of non-overlapping hyper-spheres of radius $\sqrt{\epsilon_n}$ over a hyper-sphere of radius $(\sqrt{A}-\sqrt{\epsilon_n})$, with
\begin{align}
    \epsilon_n = \frac{A}{n^{\frac{1}{2}(1-b)}} \;,\,
\end{align}
where $b>0$ is an arbitrary small. As observed in Subsection~\ref{Subsec.AchievFast}, there exists an arrangement $$\bigcup_{i=1}^{2^{n\log(n)R}}\S_{\fu_i}(n,\sqrt{\epsilon_n}),$$ over $\S_{\f0}(n,\sqrt{A}-\sqrt{\epsilon_n})$ with a density of $\Delta_n\geq 2^{-n}$ \cite[Lem.~2.1]{C10}.
We assign a codeword to the center of each small sphere $\fu_i$. Since the small spheres have the same volume, the total number of spheres, i.e., the codebook size, satisfies
\begin{align}
    2^{n\log(n)R} &= \frac{\text{Vol}\left(\bigcup_{i=1}^{2^{n\log(n)R}}\S_{\fu_i}(n,\sqrt{\epsilon_n})\right)}{\text{Vol}(\S_{\fu_1}(n,\sqrt{\epsilon_n}))}
    \nonumber\\&
    \geq 2^{-n}\cdot\frac{\text{Vol}(\S_{\f0}(n,\sqrt{A}-\sqrt{\epsilon_n}))}{\text{Vol}(\S_{\fu_1}(n,\sqrt{\epsilon_n}))}
    \nonumber\\
    & =2^{-n}\cdot\left(\frac{\sqrt{A}-\sqrt{\epsilon_n}}{\sqrt{\epsilon_n}}\right)^n \;,\,
\end{align}
in a similar manner as in  Subsection~\ref{Subsec.CodebookConstructionGaussian}, 
hence,
\begin{align}
    \label{Eq.RateSlow}
    R\geq 
    \frac{1}{4}(1-b)-\frac{2}{\log(n)} \;,\,
\end{align}
which tends to $\frac{1}{4}$ when $n\to\infty$ and $b\to 0$.

\subsubsection*{Encoding}
Given a message $i\in [\![L(n,R)]\!]$, 
 transmit $\bar{\fx}=\bar{\fu}_i$.

\subsubsection*{Decoding}
Let
\begin{align}
    \delta_n & = \frac{\gamma^2 \epsilon_n}{3}
    \nonumber\\
    & = \frac{A\gamma^2}{3n^{\frac{1}{2}(1-b)}} \;,\,
    \label{Eq.deltaSlowDirect}
\end{align}
where $b>0$ is an arbitrary small. To identify whether a message $j\in [\![L(n,R)]\!]$ was sent, given the fading coefficient $g$, the decoder checks whether the channel output $\bar{\fy}$ belongs to the following decoding set,
\begin{align}
    \D_{j,g} = \left\{ \bar{\fy}\in\mathbb{R}^n \,:\; \sum_{t=1}^n (\bar{y}_t-g    \bar{u}_{j,t})^2
    \leq \sqrt{\sigma_Z^2+\delta_n} \right\} \;.\,
\end{align}
\subsubsection*{Error Analysis}
Consider the type I error, i.e., when the transmitter sends $\bar{\fu}_i$, yet $\bar{\fY}\notin\D_{i,G}$. For every $i\in[\![L(n,R)]\!]$, the type I error probability is given by
\begin{align}
        P_{e,1}(i) & = \sup_{g\in\G} \left[ P_{e,1} \left( i \,| g \; \right) \right] \;,\,
        \label{Eq.Pe2GSlow0}
\end{align}
where we have defined
\begin{align}
    P_{e,1} \left( i \,| g \; \right) & \equiv \Pr\left(\sum_{t=1}^n (\bar{Y}_t-G\bar{u}_{i,t})^2 > \sigma_Z^2+\delta_n \,\big|\,\bar{\fx}=(\bar{u}_{i,t})_{t=1}^n \,,\, G = g \right)
    \nonumber\\
    & = \Pr\left(\sum_{t=1}^n {\bar{Z}_t}^2 > \sigma_Z^2 + \delta_n \right)
    \;,\,
\end{align}
for $g\in\G$, as the fading coefficient $G$ and the noise vector $\bar{\fZ}$ are statistically independent.

Now we can bound the type I error probability by
\begin{align}
    P_{e,1} \left( i \,| g \; \right) & = \Pr\left( \sum_{t=1}^n {\bar{Z}_t}^2 - \sigma_Z^2 > \delta_n \right)
    \nonumber\\
    & \leq \frac{3\sigma_Z^4}{n\delta_n^2}
    \nonumber\\
    & = \frac{27\sigma_Z^4}{A^2 \gamma^4 n^b}
    \nonumber\\
    & \leq \lambda_1 \;,\,
\end{align}
for sufficiently large $n$ and arbitrarily small $\lambda_1>0$, where the first inequality holds by Chebyshev's inequality and since the fourth moment of a Gaussian variable $V\sim \N(0,\sigma_V^2)$ is $\mathbb{E}\{V^4\}=3\sigma_V^4$. Thus we have $P_{e,1} \left( i \, | g \,\right) \leq \lambda_1$ for all $g\in\G$.
Hence, the type I error probability satisfies $P_{e,1}\left( i \right) \leq \lambda_1$ (see (\ref{Eq.Pe2GSlow0})).

Next we address the type II error, i.e., when $\bar{\fY}\in\D_{j,G}$ while the transmitter sent $\bar{\fu}_i$.
Then, for every $i,j\in[\![L(n,R)]\!]$, where $i\neq j$, the type II error probability is given by
\begin{align}
    P_{e,2}(i,j) & = \sup_{g\in\G} \left[ P_{e,2} \left( i,j \,| g \; \right) \right] \;,\,
    \label{Eq.Pe2GSlow1}
\end{align}
where we have defined
\begin{align}
    P_{e,2}\left( i,j \,| g \; \right) & \equiv \Pr\left(\sum_{t=1}^n (\bar{Y}_t - G\bar{u}_{j,t})^2
    \leq \sigma_Z^2+\delta_n \,\big|\,\bar{\fx}=(\bar{u}_{i,t})_{t=1}^n \,,\, G = g \right)
    \nonumber\\
    & = \Pr\left(\sum_{t=1}^n (g(\bar{u}_{i,t}-\bar{u}_{j,t})+\bar{Z}_t)^2
    \leq \sigma_Z^2+\delta_n \right)
    \label{Eq.Pe2GSlow}
\end{align}
for $g\in\G$, as the fading coefficient $G$ and the noise vector $\bar{\fZ}$ are statistically independent.
Now we bound the probability within the square brackets. 

We divide into two cases. First, consider $g \in \G$ such that $\norm{g \left( \bar{\fu}_i - \bar{\fu}_j \right)} > 2 \sqrt{\sigma_Z^2 + \delta_n}$.
Therefore, by the reverse triangle inequality, $\norm{\fa - \fb} \geq \left| \norm{\fa} - \norm{\fb} \right|$, we have
\begin{align}
    \sqrt{ \sum_{t=1}^n \left( \left( g\left( \bar{u}_{i,t} - \bar{u}_{j,t} \right) \right) + \bar{Z}_t \right)^2} & \geq \norm{g \left( \bar{\fu}_i - \bar{\fu}_j \right)} - \norm{\bar{\fZ}}
    \nonumber\\
    & \geq 2 \sqrt{\sigma_Z^2 + \delta_n} - \norm{\bar{\fZ}} \,.
\end{align}
Hence, for every $g$ such that $\norm{g \left( \bar{\fu_i} - \bar{\fu_j} \right)} > 2 \sqrt{\sigma_Z^2 + \delta_n}$, we can bound the type II error probability by 
\begin{align}
    \label{Ineq.TypeII_Slow_Conditional1}
    P_{e,2} \left( i,j \,\big| g \,\right)
    & \leq \Pr\left( \norm{\bar{\fZ}} \geq \sqrt{\sigma_Z^2 + \delta_n} \right)
    \nonumber\\
    & = \Pr\left(\sum_{t=1}^n {\bar{Z}_t}^2 > \sigma_Z^2 + \delta_n \right)
    \nonumber\\
    & \leq \frac{3\sigma_Z^4}{n\delta_n^2} 
    \nonumber\\
    & = \frac{27\sigma_Z^4}{n^b A^2 \gamma^4}
    \nonumber\\
    & \leq \lambda_2 \;,\,
\end{align}
for sufficiently large  $n$ and arbitrarily small $\lambda_2>0$, where the second inequality follows by Chebyshev inequality.

Now, we turn to the second case, i.e., when
\begin{align}
    \label{Ineq.Bound_G}
    \norm{g \left( \bar{\fu}_i - \bar{\fu}_j \right)} \leq 2 \sqrt{\sigma_Z^2 + \delta_n} \;.\,
\end{align}
Observe that for every given $g\in\G$, 
\begin{align}
    \sum_{t=1}^n (g(\bar{u}_{i,t}-\bar{u}_{j,t}) + \bar{Z}_t)^2=
    \sum_{t=1}^n g^2(\bar{u}_{i,t}-\bar{u}_{j,t})^2+\sum_{t=1}^n \bar{Z}_t^2+2\sum_{t=1}^n g(\bar{u}_{i,t}-\bar{u}_{j,t})Z_t \;.\,
     \label{Eq.Pe2normSlow}
\end{align}
Then define the event
\begin{align}
    \E_0(g) = \left\{ \Big|\sum_{t=1}^n g(\bar{u}_{i,t}-\bar{u}_{j,t})\bar{Z}_t\Big|>\frac{\delta_n}{2} \right\} \;.\,
    \label{Eq.E0FadingSlow}
\end{align}
By Chebyshev's inequality, the probability of this event vanishes, 
\begin{align}
    \Pr\left( \E_0(g) \right) & \leq \frac{g^2\sum_{t=1}^n (\bar{u}_{i,t}-\bar{u}_{j,t})^2\mathbb{E}\{\bar{Z}_t^2\}}{\left(\frac{\delta_n}{2} \right)^2}
    \nonumber\\
    & = \frac{4 \sigma^2_Z \norm{g \left( \bar{\fu}_i-\bar{\fu}_j \right)}^2}{n\delta_n^2}
    \nonumber\\
    & \leq \frac{16\sigma^2_Z \left( \sigma_Z^2 + \delta_n \right) }{n\delta_n^2}
    \nonumber\\
    & \leq \tau_0 \;,\,
    \label{Eq.PeE0G1Slow}
\end{align}
for sufficiently large $n$ and arbitrarily small $\tau_0 > 0$, where the first inequality holds since the sequence $\{\bar{Z}_t\}$ is i.i.d. $\sim\mathcal{N}\left(0,\frac{\sigma_Z^2}{n}\right)$, and the second inequality follows from (\ref{Ineq.Bound_G}).
Furthermore, observe that given the complementary event $\E_0^c(g)$, we have
\begin{align}
    2\sum_{t=1}^n g\left( \bar{u}_{i,t} - \bar{u}_{j,t} \right) \bar{Z}_t \geq - \delta_n \;,\,
\end{align}
Therefore, the event $\E_0^c$, the type II error event in (\ref{Eq.Pe2GSlow}), and the identity in (\ref{Eq.Pe2normSlow}) together imply that the following event occurs,
\begin{align}
    \E_1 (g) & = \left\{ \sum_{t=1}^n g^2(\bar{u}_{i,t}-\bar{u}_{j,t})^2 + \sum_{t=1}^n \bar{Z}_t^2\leq \sigma_Z^2+2\delta_n \right\} \;.\,
    \label{Eq.Pe2normConsequenceSlow}
\end{align}
Now lets define
\begin{align}
    \H_{i,j}^n = \left\{ \fG \in \G^n \;:\, \sum_{t=1}^n (g(\bar{u}_{i,t}-\bar{u}_{j,t})+\bar{Z}_t)^2 \leq \sigma_Z^2+\delta_n \right\} \;.\,
\end{align}
Therefore, applying the law of total probability to (\ref{Eq.Pe2Bound1Slow}), we have
\begin{align}
    P_{e,2} \left( i,j \,\big| g \,\right) & =
    \Pr\left( \H_{i,j}^n \cap \E_0(g) \right) + \Pr\left( \H_{i,j}^n \cap {\E_0^c(g)} \right)
    \nonumber\\
    & \leq \Pr(\E_0(g)) + \Pr\left( \E_1(g) \right)
    \nonumber\\
    & \leq \tau_0 + \Pr\left( \E_1(g) \right) \;,\,
    \label{Eq.Pe2Bound1Slow}
\end{align}
where the last inequality holds by (\ref{Eq.PeE0G1Slow}).

Now we focus on the second term in (\ref{Eq.Pe2Bound1Slow}), i.e., $\Pr(\E_1(g))$. To this end, observe that based on the codebook construction, each codeword is surrounded by a sphere of radius $\sqrt{\epsilon_n}$, which implies that
\begin{align}
    \norm{\bar{\fu}_i-\bar{\fu}_j}\geq \sqrt{\epsilon_n} \;.\,
\end{align}
Thus, we have
\begin{align}
    g^2 \norm{\bar{\fu}_i - \bar{\fu}_j}^2\geq  \gamma^2 \epsilon_n \;,\,
\end{align}
where
$\gamma$ is the minimal value in ${\G}$. Hence, according to (\ref{Eq.Pe2Bound1Slow}),
\begin{align}
    P_{e,2} \left( i,j \,\big| g \,\right)
    & \leq \Pr\left( \norm{\bar{\fZ}}^2\leq \sigma_Z^2+2\delta_n-\gamma^2 \epsilon_n  \right) + \tau_0
    \nonumber\\
    & =
    \Pr\left( \norm{\bar{\fZ}}^2-\sigma_Z^2\leq -\delta_n \right)+\tau_0 \;,\,
\end{align}
(see (\ref{Eq.deltaSlowDirect})). Therefore, by Chebyshev's inequality, 
\begin{align}
    \label{Ineq.TypeII_Slow_Conditional2}
    P_{e,2} \left( i,j \,\big| g \,\right) & \leq \Pr\left(\sum_{t=1}^n \bar{Z}_t^2-\sigma_Z^2 \leq -\delta_n
    \right) + \tau_0
    \nonumber\\
    & \leq \frac{\sum_{t=1}^n\text{var}(\bar{Z}_t^2)}{\delta_n^2} + \tau_0
    \nonumber\\
    & \leq \frac{\sum_{t=1}^n \mathbb{E}\{\bar{Z}_t^4\}}{\delta_n^2} + \tau_0
    \nonumber\\
    & = \frac{ 3n \left(\frac{\sigma_Z^2}{n}\right)^2}{\delta_n^2} + \tau_0
    \nonumber\\
    & = \frac{3\sigma_Z^4}{n\delta_n^2} + \tau_0
    \nonumber\\
    & = \frac{27\sigma_Z^4}{n^bA^2\gamma^4} + \tau_0
    \nonumber\\
    & \leq \lambda_2 \;,\,
\end{align}
for sufficiently large $n$.
Based on (\ref{Ineq.TypeII_Slow_Conditional1}) and (\ref{Ineq.TypeII_Slow_Conditional2}),
we have $P_{e,2} \left( i,j \, | g \,\right) \leq \lambda_2$ for all $g\in\G$.
Hence, the type II error probability satisfies $P_{e,2}\left( i, j \right) \leq \lambda_2$ (see (\ref{Eq.Pe2GSlow1})).

We have thus shown that for every $\lambda_1,\lambda_2>0$ and sufficiently large $n$, there exists a $(2^{n\log (n) R}, n, \lambda_1, \lambda_2)$ code.
As we take the limits of $n\rightarrow\infty$, and then $b \rightarrow 0$, the lower bound on the achievable rate tends to $\frac{1}{4}$, by (\ref{Eq.RateSlow}). This completes the achievability proof for Theorem~\ref{Th.GDICapacitySlow}.
\subsection{Upper Bound (Converse Proof)}
\label{Subsec.DoubleExponentialSlow}
Suppose that $R$ is an achievable rate in the $L$-scale for the Gaussian channel with slow fading. 
Consider a sequence of $(L(n,R),n,\lambda_1^{(n)},\lambda_2^{(n)})$ codes $(\U^{(n)},\D^{(n)})$, such that
$\lambda_1^{(n)}$ and $\lambda_2^{(n)}$ tend to zero as $n\rightarrow\infty$. We begin with the following lemma.
\begin{lemma}
\label{Lem.DConverseSlow}
Consider a sequence of codes  as described above.
Let $b>0$ be an arbitrarily small constant that does not depend on $n$.
%
 Then there exists $n_0(b)$, such that for all $n>n_0(b)$, every pair of codewords in the codebook $\U^{(n)}$ are distanced by at least $\sqrt{n\epsilon_n}$, i.e.,
    \begin{subequations}
    \begin{align}
     \norm{\fu_{i_1} - \fu_{i_2}}\geq \sqrt{n\epsilon_n} \;,\,
    \end{align}
    where
    \begin{align}
        \epsilon_n = \frac{A}{n^{2(1+b)}} \;,\,
    \end{align}
    \end{subequations}
for all $i_1,i_2\in [\![L(n,R)]\!]$, such that $i_1\neq i_2$.
\end{lemma}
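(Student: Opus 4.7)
The plan is to mirror the contradiction argument of Lemma~\ref{Lem.DConverseFast}, exploiting the worst-case definition of $P_{e,1}$ and $P_{e,2}$ under slow fading. Suppose, contrary to the claim, that there exist $i_1\neq i_2$ with $\norm{\fu_{i_1}-\fu_{i_2}}<\alpha_n\equiv\sqrt{n\epsilon_n}=\sqrt{A}/n^{\frac{1}{2}+b}$. Since the error probabilities are suprema over $g\in\G$, for any fixed $g_0\in\G$ I can lower bound
\begin{align}
P_{e,1}(i_1)+P_{e,2}(i_2,i_1)\geq 1-\int_{\D_{i_1,g_0}}\!\!\prod_{t=1}^n f_Z(y_t-g_0 u_{i_1,t})\,d\fy\;+\int_{\D_{i_1,g_0}}\!\!\prod_{t=1}^n f_Z(y_t-g_0 u_{i_2,t})\,d\fy.
\end{align}
Fixing any $g_0\in\G$ (any one exists, and $|g_0|$ is a finite constant), the discrepancy vector satisfies $\norm{g_0(\fu_{i_1}-\fu_{i_2})}\leq |g_0|\alpha_n=O(n^{-\frac{1}{2}-b})$. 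The goal is to show that the right-hand side tends to $1$ as $n\to\infty$, yielding a contradiction with $\lambda_1^{(n)},\lambda_2^{(n)}\to 0$.

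Next, following the fast-fading proof I would split $\D_{i_1,g_0}$ into a thickened core and a residual shell by introducing, for a small $\zeta>0$, the set
\begin{align}
\B=\left\{\fy\in\D_{i_1,g_0}\,:\,\norm{\fy-g_0\fu_{i_2}}\leq\sqrt{n(\sigma_Z^2+\zeta)}\right\},
\end{align}
and its ``outer'' counterpart $\C^c=\{\fy\,:\,\norm{\fy-g_0\fu_{i_2}}>\sqrt{n(\sigma_Z^2+\zeta)}\}$. On $\D_{i_1,g_0}\setminus\B\subseteq\C^c$, the reverse triangle inequality combined with $\norm{g_0(\fu_{i_1}-\fu_{i_2})}\leq|g_0|\alpha_n$ yields $\norm{\fy-g_0\fu_{i_1}}\geq\sqrt{n(\sigma_Z^2+\eta)}$ for some $0<\eta<\zeta$ and $n$ large. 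A Chebyshev bound on the chi-squared tail of $\norm{\fZ}^2$ then shows $\int_{\C^c}\prod_t f_Z(y_t-g_0 u_{i_1,t})\,d\fy\leq 3\sigma_Z^4/(n\eta^2)\leq\kappa$, so
\begin{align}
1-\kappa-P_{e,1}(i_1)\leq\int_{\B}\prod_{t=1}^n f_Z(y_t-g_0 u_{i_1,t})\,d\fy.
\end{align}

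The key step is the density-comparison estimate on $\B$. For $\fy\in\B$, the triangle inequality together with $\norm{\fy-g_0\fu_{i_2}}\leq\sqrt{n(\sigma_Z^2+\zeta)}$ gives
\begin{align}
\norm{\fy-g_0\fu_{i_1}}^2-\norm{\fy-g_0\fu_{i_2}}^2\leq|g_0|^2\alpha_n^2+2|g_0|\alpha_n\sqrt{n(\sigma_Z^2+\zeta)}=O(n^{-b}),
\end{align}
which is smaller than any prescribed $\theta>0$ for large $n$. Exponentiating, this produces the continuity estimate $\prod_t f_Z(y_t-g_0 u_{i_1,t})-\prod_t f_Z(y_t-g_0 u_{i_2,t})\leq \kappa\prod_t f_Z(y_t-g_0 u_{i_1,t})$ on $\B$, in direct analogy with inequality (\ref{Ineq.GaussianContinuityFast}). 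Substituting back into the lower bound on $P_{e,1}(i_1)+P_{e,2}(i_2,i_1)$ and using $\D_{i_1,g_0}\supseteq\B$ in the second integral then yields $\lambda_1+\lambda_2\geq 1-2\kappa$, contradicting the hypothesis that the error probabilities vanish for sufficiently small $\kappa$.

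The main obstacle anticipated is bookkeeping the perturbation parameters: one must choose $b,\zeta,\eta,\theta,\kappa$ so that the chi-squared tail, the density-ratio bound, and the contradiction margin all close simultaneously. This is genuinely easier than in the fast-fading case because the scalar fading $g_0$ is fixed rather than random, so there is no need for the auxiliary high-probability event $\A_{i_1,i_2}^c$ on the fading sequence; the proof proceeds deterministically in $g_0$ and exploits only the supremum in the definition of $P_{e,1}$ and $P_{e,2}$.
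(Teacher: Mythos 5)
Your proposal is correct and takes essentially the same route as the paper's proof of Lemma~\ref{Lem.DConverseSlow}: the same contradiction hypothesis $\norm{\fu_{i_1}-\fu_{i_2}}<\alpha_n$, the same sets $\mathfrak{B}_{i_1,i_2}$ and $\mathfrak{C}_{i_1,i_2}$, the same Chebyshev tail bound on $\norm{\fZ}^2$, and the same Gaussian density-continuity estimate leading to $\lambda_1+\lambda_2\geq 1-2\kappa$. The only cosmetic difference is that you fix a single $g_0\in\G$ to lower-bound the worst-case error probabilities, while the paper carries the supremum over $g$ and uses its sub-additivity before applying the same pointwise bounds; both exploit the compound (supremum) error definition and, as you correctly observe, make the fading event $\A_{i_1,i_2}$ of the fast-fading proof unnecessary.
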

\begin{proof}
Fix $\lambda_1$ and $\lambda_2$. Let $\kappa,\theta,\zeta>0$ be arbitrarily small. Assume to the contrary that there exist two messages $i_1$ and $i_2$, where $i_1\neq i_2$, such that
    \begin{align}
        \norm{\fu_{i_1} - \fu_{i_2}} < \sqrt{n\epsilon_n} = \alpha_n \;,\,
        \label{Eq.distanceCodeSlow}
    \end{align}
    where
    \begin{align}
        \label{Eq.Alpha_nSlow1}
        \alpha_n\equiv \frac{\sqrt{A}}{n^{\frac{1}{2}(1+2b)}} \;.\,
    \end{align}
    Now let us define two subsets as follows
    \begin{align}
        \mathfrak{B}_{i_1,i_2} = \left\{ \fy\in\D_{i_1,g} \,:\; \sum \limits_{t=1}^n \left( y_t - gu_{{i_2},t} \right)^2 \leq n\left( \sigma_Z^2 + \zeta \right) \right\}
        \label{Eq.Event_B_Slow}
    \end{align}
    \begin{align}
        \mathfrak{C}_{i_1,i_2} & = \left\{ \fy \in \Y^n \,:\; \sum \limits_{t=1}^n \left( y_t - gu_{{i_2},t} \right)^2 \leq n\left( \sigma_Z^2 + \zeta \right) \right\} \;.\,
        \label{Eq.Event_C_Slow}
    \end{align}
    Observe that for every $g\in\G$,
    \begin{align}
        \int_{\D_{i_1,g}} \left(\prod_{t=1}^n f_{Z} \left( y_t - gu_{i_1,t} \right) \right) d\fy & = \int_{\mathfrak{B}_{i_1,i_2}}
        \left(\prod_{t=1}^n f_{Z} \left( y_t - gu_{i_1,t} \right) \right) \, d\fy + \int_{\D_{i,1,g} \setminus \mathfrak{B}_{i_1,i_2}} \left(\prod_{t=1}^n f_{Z} \left( y_t - gu_{i_1,t} \right) \right) \, d\fy
        \nonumber\\
        & \leq \int_{\mathfrak{B}_{i_1,i_2}} \left(\prod_{t=1}^n f_{Z} \left( y_t - gu_{i_1,t} \right) \right) \, d\fy + \int_{\mathfrak{C}_{i_1,i_2}^c} \left( \prod_{t=1}^n f_{Z} \left (y_t - gu_{i_1,t} \right) \right) d\fy \;.\,
        \label{Eq.Pe1boundConv0Slow}
    \end{align}
    where the last inequality holds since
    \begin{align}
        \mathfrak{C}_{i_1,i_2}^c \supset \D_{i_1,g} \setminus \mathfrak{B}_{i_1,i_2} \;,\,
    \end{align}
     with $\setminus$ being the set minus operation. Consider the second integral, for which the domain is $\mathfrak{C}_{i_1,i_2}^c$ and where we denote
        \begin{align}
            \fg\equiv (g,g,\ldots, g) \;.\,
        \end{align}
        Then, by the triangle inequality,
        \begin{align}
            \norm{\fy-\fg\circ\fu_{i,1}}&\geq
            \norm{\fy-\fg\circ\fu_{i,2}}-\norm{\fg\circ(\fu_{i,1}-\fu_{i,2})}
            \nonumber\\
            &=
            \norm{\fy-\fg\circ\fu_{i,2}}-g\norm{\fu_{i,1}-\fu_{i,2}}
            \nonumber\\
            &> \sqrt{n(\sigma_Z^2+\zeta)}-g\norm{\fu_{i,1}-\fu_{i,2}}
            \nonumber\\ &\geq\sqrt{n(\sigma_Z^2+\zeta)}-g\alpha_n \;.\,
        \end{align}
    For sufficiently large $n$, this implies the following subset
        \begin{align}
           \mathfrak{F}_{i_1,i_2}^c = \left\{y^n \in \Y^n \; : \, \norm{\fy - \fg \circ\fu_{i,1}} > \sqrt{n\left( \sigma_Z^2 + \eta \right)} \right\} \;,\,
           \label{Eq.Regiong0_Slow}
        \end{align}
    for  $\eta<\frac{\zeta}{2}$. That is,
    \begin{align}
        \left\{\fy \in \Y^n \; : \, \norm{\fy-\fg\circ\fu_{i,2}} \geq
        \sqrt{n\left( \sigma_Z^2 + \zeta \right)} \right\} \quad \overset{\text{implies}}{\longrightarrow} \quad \left\{\fy \in \Y^n \; : \, \norm{\fy-\fg\circ\fu_{i,1}} \geq
        \sqrt{n \left( \sigma_Z^2 + \eta \right)} \right\} \;.\,
    \end{align}
    Thus we deduce that
        \begin{align}
            \mathfrak{F}_{i_1,i_2}^c \supset \mathfrak{C}_{i_1,i_2}^c \;,\,
        \end{align}
    Hence, the second integral in the right hand side of (\ref{Eq.Pe1boundConv0Slow}) is bounded by
    \begin{align}
        \int_{\mathfrak{F}_{i_1,i_2}^c} f_{\fZ}(\fy-\fg\circ\fu_{i_1}) d\fy & = \Pr\left( \norm{\fy-\fg\circ\fu_{i,1}} \geq
        \sqrt{n \left( \sigma_Z^2 + \eta \right)} \right) 
        \nonumber\\
        & = \Pr(\norm{\fZ}^2-n\sigma_Z^2>n\eta)
        \nonumber\\
        & \leq \frac{3\sigma_Z^4}{n\eta^2}
        \nonumber\\
        & \leq \kappa \;,\,
    \end{align}
    for sufficiently large $n$, where the third line is due to Chebyshev's inequality, followed by the substitution of $\fz\equiv \fy-\fg\circ\fu_{i_1} $.
   Thus, by (\ref{Eq.Pe1boundConv0Slow}),
    \begin{align}
        \label{Eq.ComplTypeISlow}
        \int_{\D_{i_1,g}} \left( \prod_{t=1}^n f_{Z} \left( y_t - g u_{i_1,t} \right) \right) \, d\fy \leq  \int_{\mathfrak{B}_{i_1,i_2}} f_{\fZ} \left( \fy - \fg \circ \fu_{i_1} \right) \, d\fy + \kappa \;.\,
    \end{align}
    Now, we can focus on the first integral with domain of $\mathfrak{B}_{i_1,i_2}$, i.e., when
    \begin{align}
        \norm{\fy-\fg\circ\fu_{i,2}}\leq\sqrt{n(\sigma_Z^2+\zeta)} \;.\,
        \label{Eq.ui2DistSlow}
    \end{align}
    Observe that
    \begin{align}
        f_{\fZ}(\fy-\fg\circ\fu_{i_1}) - f_{\fZ}(\fy-\fg\circ\fu_{i_2})=  f_{\fZ}(\fy-\fg\circ\fu_{i_1})\left[1-e^{-\frac{1}{2\sigma_Z^2}\left(\norm{\fy-\fg\circ\fu_{i_2}}^2-\norm{\fy-\fg\circ\fu_{i_1}}^2\right)}\right] \;.\,
    \end{align}
    By the triangle inequality,
    \begin{align}
        \label{Ineq.triangleSlow}
        \norm{\fy-\fg\circ\fu_{i_1}}\leq  \norm{\fy-\fg\circ\fu_{i_2}} +  g\norm{\fu_{i_1} - \fu_{i_2}} \;.\,
    \end{align}
    Taking the square of both sides, we have
    \begin{align}
        \norm{\fy-\fg\circ\fu_{i_1}}^2 &\leq \norm{\fy-\fg\circ\fu_{i_2}}^2 +  g^2\norm{\fu_{i_2} - \fu_{i_1}}^2 +
        2\norm{\fy-\fg\circ\fu_{i_2}}\cdot g\norm{\fu_{i_2} - \fu_{i_1}}
        \nonumber\\
        & \leq
        \norm{\fy-\fg\circ\fu_{i_2}}^2+ g^2\alpha_n^2+ 2g\alpha_n\sqrt{n(\sigma_Z^2+\zeta)}
        \nonumber\\
        &=
       \norm{\fy-\fg\circ\fu_{i_2}}^2+ g^2\alpha_n^2+
       2g\frac{\sqrt{A(\sigma_Z^2+\zeta)}}{n^b} \;,\,
    \end{align}
    where the second line follows from (\ref{Eq.distanceCodeSlow}) and (\ref{Eq.ui2DistSlow}), and the line is due to (\ref{Eq.Alpha_nSlow1}).
    Thus, for sufficiently large $n$,
    \begin{align}
        \norm{\fy-\fg\circ\fu_{i_1}}^2-\norm{\fy-\fg\circ\fu_{i_2}}^2\leq \theta \;.\,
    \end{align}
    Hence,
    \begin{align}
        \label{Ineq.GaussianContinuitySlow}
        f_{\fZ}(\fy-\fg\circ\fu_{i_1}) - f_{\fZ}(\fy-\fg\circ\fu_{i_2})&\leq  f_{\fZ}(\fy-\fg\circ\fu_{i_1})\left(1-e^{-\frac{\theta}{2\sigma_Z^2}}\right)
        \nonumber\\
        & \leq \kappa f_{\fZ}\left(\fy-\fg\circ\fu_{i_1}\right) \;,\,
    \end{align}
    for sufficiently small $\theta > 0$, such that $1-e^{-\frac{\theta}{2\sigma_Z^2}} \leq \kappa$. Now by (\ref{Eq.ComplTypeISlow}), we get,
    \begin{align}
       \lambda_1 + \lambda_2 & \geq 
       P_{e,1}(i_1) + P_{e,2}(i_2,i_1)
       \nonumber\\
       & \stackrel{(a)}{\geq} \sup_{g\in\G} \left[ P_{e,1}(i_1 | g) \right] + \sup_{g\in\G} \left[ P_{e,2}(i_2,i_1 | g) \right]
       \nonumber\\
       & \stackrel{(b)}{\geq} \sup_{g\in\G} \left[ P_{e,1}(i_1 | g) + P_{e,2}(i_2,i_1 | g) \right]
      \nonumber\\
      & \geq \sup_{g\in\G} \left[ 1 - \int_{\D_{i_1,g}} \left( \prod_{t=1}^n f_{Z} \left( y_t - g u_{i_1,t} \right) \right) \, d \mathbf{y} + \int_{\D_{i_1,g}} \left( \prod_{t=1}^n f_{Z} \left( y_t - g u_{i_2,t} \right) \right) \, d\fy \right]
       \nonumber\\
      & \stackrel{(c)}{\geq} \sup_{g\in\G} \left[ 1 - \kappa -
      \int_{\mathfrak{B}_{i_1,i_2}} f_{\fZ} \left( \fy - \fg \circ \fu_{i_1} \right) \, d\fy +
      \int_{\D_{i_1,g}} f_{\fZ}(\fy-\fg\circ\fu_{i_2}) \, d\fy \right]
      \nonumber\\
      & \geq \sup_{g\in\G} \left[ 1 - \kappa - \int_{\mathfrak{B}_{i_1,i_2}} f_{\fZ} \left( \fy - \fg \circ \fu_{i_1} \right) \, d\fy + \int_{\mathfrak{B}_{i_1,i_2}} f_{\fZ} \left( \fy - \fg \circ \fu_{i_2} \right) \, d\fy \right]
      \nonumber\\
      & = \sup_{g\in\G} \left[1 - \kappa - \int_{\mathfrak{B}_{i_1,i_2}} \left( f_{\fZ} \left( \fy - \fg \circ \fu_{i_1} \right) - f_{\fZ} \left( \fy - \fg \circ \fu_{i_2} \right) \right) \, d\fy \right] \;,\,
    \end{align}
    where $(a)$ follows by definitions given in (\ref{Eq.Pe2GSlow0}) and (\ref{Eq.Pe2GSlow1}), $(b)$ holds since supremum is sub-additive and $(c)$ is due to (\ref{Eq.ComplTypeISlow}). Hence, by (\ref{Ineq.GaussianContinuitySlow}),
    \begin{align}
        \lambda_1 + \lambda_2 
        & \geq 1 - \kappa - \kappa \inf_{g\in\G} \left[ \int_{\mathfrak{B}_{i_1,i_2}} f_{\fZ} \left( \fy - \fg \circ \fu_{i_1} \right) \, d\fy \right]
        \nonumber\\
        & \geq 1 - 2\kappa \;,\,
    \end{align}
    which leads to a contradiction for sufficiently small $\kappa$ such that $2\kappa
    <1-\lambda_1-\lambda_2$. This completes the proof of Lemma~\ref{Lem.DConverseSlow}.
    \end{proof}
By Lemma~\ref{Lem.DConverseSlow} we can define an arrangement of non-overlapping spheres $\S_{\fu_i}(n,\sqrt{n\epsilon_n})$ of radius $\sqrt{n\epsilon_n}$ centered at the codewords $\fu_i$.
Since the codewords all belong to a sphere $\S_{\f0}(n,\sqrt{nA})$ of radius $\sqrt{nA}$ centered at the origin, it follows that the number of packed spheres, i.e., the number of codewords $2^{n\log(n)R}$, is bounded by
\begin{align}
    2^{n\log(n)R} & \leq \frac{\text{Vol}(\S_{\f0}(n,\sqrt{nA}+\sqrt{n\epsilon_n}))}{\text{Vol}(\S_{\fu_i}(n,\sqrt{n\epsilon_n}))}
    \nonumber\\
    & = \left(\frac{\sqrt{A}+\sqrt{\epsilon_n}}{\sqrt{\epsilon_n}}\right)^n \;.\,
\end{align}
Thus,
\begin{align}
     R &\leq \frac{1}{\log n} \log\left(\frac{\sqrt{A}+\sqrt{\epsilon_n}}{\sqrt{\epsilon_n}}\right)
    \nonumber\\
    & = 1 + b + \frac{\log \left( 1 + \frac{1}{n^{1 + b}} \right) }{\log n} \;,\,
\end{align}
by the same arguments as in the proof for the Gaussian channel with fast fading (see (\ref{Ineq.Rate_Fast})), which tends to $1+b$ as $n \to \infty$. Now, since $b>0$ is arbitrarily small, an achievable rate must satisfy $R \leq 1$. This completes the proof of Theorem~\ref{Th.GDICapacitySlow}.
\qed
\bibliographystyle{IEEEtran}
\bibliography{IEEEabrv,confs-jrnls,Ref}
\end{document}